\newtheorem{theorem}{Theorem}[section]
\newtheorem{lemma}[theorem]{Lemma}
\newtheorem{definition}[theorem]{Definition}
\newtheorem{observation}[theorem]{Observation}
\newtheorem{open}[theorem]{Open Problem}
\title{Some properties are not even partially testable\thanks{The research leading to these results has received funding from the European Union's - Seventh Framework Programme [FP7/2007-2013] under grant agreement n° 202405 (PROPERTY TESTING).}}
\author{Eldar Fischer\thanks{Department of Computer Science, Technion, Haifa 32000, Israel. \mbox{eldar@cs.technion.ac.il}} \and Yonatan Goldhirsh\thanks{Department of Computer Science, Technion, Haifa 32000, Israel. \mbox{jongold@cs.technion.ac.il}} \and Oded Lachish\thanks{Birkbeck, University of London, London, UK. \mbox{oded@dcs.bbk.ac.uk}}}
\begin{document}
\maketitle
\thispagestyle{empty}
\begin{abstract}
For a property $P$ and a sub-property $P'$, we say that $P$ is \emph{$P'$-partially testable with $q$ queries} if there exists an algorithm that distinguishes, with high probability, inputs in $P'$ from inputs $\epsilon$-far from $P$ by using $q$ queries. There are natural properties that require many queries to test, but can be partitioned into a small number of subsets for which they are partially testable with very few queries, and in fact the minimal $O(1/\epsilon)$.

We prove that this is not always the case. More than that, we prove the existence of a property $P$ such that the only subsets $P'$ for which $P$ is $P'$-partially testable are very small. To prove this we introduce new techniques for proving property testing lower bounds. In addition to obtaining some broad-brush criteria for non-testability, this implies a lower bound on the possibility of PCPPs with a sublinear proof size. This also implies lower bounds on MAPs, a notion newly defined by Gur and Rothblum.

The new techniques rely on analyzing a proposed partial tester. We show that the queries performed by a tester must, with high probability, query indexes where a uniformly random member of the sub-property has low entropy. We then show how one can aggregate the ``entropy loss'' to deduce that a random choice in the sub-property must have low entropy, and therefore the sub-property must be small.

We develop two techniques for aggregating the entropy loss. A simpler technique that applies to non-adaptive testers is based on partitioning the input bits into high query probability parts and parts where there is an entropy loss when conditioned on the high probability parts. Against adaptive testers we develop a more intricate technique based on constructing a decision tree. The root-to-leaf paths in this tree rearrange the input into parts where each part exhibits entropy loss when conditioned on the path prefix. This decision tree is constructed by combining carefully selected decision trees from those used by the adaptive testing algorithm.
\end{abstract}

\newpage
\pagenumbering{arabic}
\section{Introduction}

Property Testing deals with randomized approximation algorithms that operate under low information situations. Formally, we deal with objects from some universe $U$ parametrized by an integer $n$, usually $\Sigma^n$ where $\Sigma$ is some finite alphabet; with a notion of \emph{distance} between two objects in $U$, usually the Hamming distance; and with a notion of a \emph{query} to an object in $U$, usually corresponding to retrieving $x_i$ for an index $i\in \{1,\ldots,n\}$. 

\begin{definition}[Testable property]
Let $P\subseteq\{0,1\}^n$. We say that $P$ is \emph{testable with $q$ queries} if there exists an algorithm $A$ that gets as input a parameter $\epsilon>0$ and query access to an input string $x\in \{0,1\}^n$ and outputs \emph{accept} or \emph{reject} such that:
\begin{itemize}
\item If $x\in P$, then $A$ accepts with probability at least $2/3$.
\item If $d(x,P)>\epsilon$, then $A$ rejects with probability at least $2/3$.
\end{itemize}
If furthermore all queries performed to the input can be decided before any of them are made, then the algorithm is \emph{non-adaptive}, and otherwise it is \emph{adaptive}. If we require that whenever $x\in P$, then the algorithm accepts with probability $1$, then the algorithm is \emph{1-sided}, and otherwise it is \emph{2-sided}.
\end{definition}

Property Testing was first addressed by Blum, Luby and Rubinfeld~\cite{BlumLR93}, and most of its general notions were first formulated by Rubinfeld and Sudan~\cite{RubinfeldS96}. The first investigated properties were mostly of an algebraic nature, such as the property of a Boolean function being linear. The first investigation of combinatorial properties and the formal definition of testability was by Goldreich, Goldwasser and Ron~\cite{GGR}. Since then Property Testing has attracted significant attention. For surveys see \cite{Fischer01theart,Ron2008,Ron2010}.

When proving that testing a property requires many queries, one might ask ``how strong is this requirement?'', which can be illustrated with an example. Alon et. al.~\cite{AlonKNS00} studied the testability of formal languages, and proved that the language $L=\{uu^Rvv^R\vert u,v\in\{0,1\}^*\}$ requires at least $\Omega(\sqrt{n})$ queries to test. Informally, one may say that the ``reason'' for this language being untestable is the difficulty in guessing the length of $uu^R$. This can be made formal by considering the languages $L_i=\{uu^Rvv^R\vert u,v\in\{0,1\}^*, |u|=i\}$, which form a partition of $L$. A simple sampling algorithm can perform $O(\epsilon^{-1})$ queries to an input and distinguish between inputs in $L_i$ and inputs $\epsilon$-far from $L$. It is also important to note that $|L\cap\{0,1\}^n|=2^{\Theta(n)}$, but its partition $L_0\cap\{0,1\}^n,\ldots,L_n\cap\{0,1\}^n$ is only to a number of subsets linear in $n$.

This phenomenon is not unique to the language considered by Alon et. al. Another example is that of graph isomorphism, first considered in the property testing framework by Alon et. al.~\cite{AlonFKS00} (and later by Fischer and Matsliah~\cite{FischerM08}), and shown to require at least $\Omega(n)$ queries to test. In this setting we consider a pair of unknown graphs given by their adjacency matrices, and we are charged with distinguishing the case where they are isomorphic from the case where more than $\epsilon n^2$ of their edges must be changed to make them isomorphic. In this case, the size of the property is $2^{\Theta(n^2)}$, and we can partition the property into $n!$ properties $\{P_{\pi}\vert \pi\in S_n\}$, each defined by $P_{\pi}=\{(G_1,G_2)\vert \pi(G_1)=G_2\}$, such that a sampling algorithm can perform $O(\epsilon^{-1})$ queries to an input and distinguish between inputs in $P_{\pi}$ and inputs $\epsilon$-far from the original property.

Thus it is tempting to ask whether this is a general phenomenon. Can any property $P$ be partitioned into $k=|P|^{o(1)}$ properties $P_1,\ldots,P_k$ such that the task of distinguishing inputs in $P_i$ from inputs far from $P$ can be performed with a number of queries that depends only on $\epsilon$? The main result of the present paper is to prove that this is not always the case. In fact, there exist properties for which any such partition must be to a number of subsets exponential in $n$.

To prove this result we consider a simpler setting that implies it. 
\begin{definition}[Partially testable property]
Let $P\subseteq\{0,1\}^n$ and $P'\subseteq P$. We say that $P$ is \emph{$P'$-partially testable with $q$ queries} if there exists an algorithm $A$ that gets as input a parameter $\epsilon>0$ and query access to an input string $x\in \{0,1\}^n$ and outputs \emph{accept} or \emph{reject} such that:
\begin{itemize}
\item If $x\in P'$, then $A$ accepts with probability at least $2/3$.
\item If $d(x,P)>\epsilon$, then $A$ rejects with probability at least $2/3$.
\end{itemize}
If furthermore all queries performed to the input can be decided before any of them are made, then the algorithm is \emph{non-adaptive}, and else it is \emph{adaptive}.
\end{definition}

Obviously, if $P$ is testable with $q$ queries, then for any subset $P'\subseteq P$ it is $P'$-partially testable with the same number of queries. On the other hand, for any property $P$ and any element $x\in P$, we have that $P$ is $\{x\}$-partially testable with $O(\epsilon^{-1})$ queries.

The partitions described above are in fact partitions of $P$ into subsets $P_1,\ldots,P_k$ such that $P$ is $P_i$-partially testable for every $1\leq i\leq k$. If there exists such a partition into not too many sets, then there must be at least one set that is relatively large. Our main result shows that there exists a property $P$ for which all subsets $P'\subseteq P$ such that $P$ is $P'$-partially testable are small. In fact, all linear codes with large dual distance define such properties.

\begin{theorem}\label{thm:mainthm}
Let $C\subseteq \{0,1\}^n$ be a linear code of size $|C|\leq 2^{\frac{1}{64}n}$ and dual distance $\Gamma$. For every $C'\subseteq C$, if $C$ is $C'$-partially testable with $q$ adaptive queries, then $|C'|\leq |C|2^{-\Theta\left(\frac{\Gamma}{q}\right)}$.
\end{theorem}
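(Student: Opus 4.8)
The plan is to convert the hypothetical partial tester into a single decision tree that, when run on a uniformly random member of $C'$, exhibits a large cumulative ``entropy loss'' relative to a uniformly random member of $C$, and then to read off the bound on $|C'|$ from this loss. Fix a small absolute constant $\epsilon$ and assume without loss of generality that the tester always makes exactly $q$ queries, so that for each fixing $r$ of its coins it is a depth-$q$ decision tree $T_r$ with accept/reject leaves. I will use two facts. (i) \emph{(Local uniformity from dual distance.)} For $c$ uniform in $C$ and any coordinate set $S$ with $|S|<\Gamma$, the restriction $c|_S$ is uniform on $\{0,1\}^S$; more generally, if $c$ is conditioned on its values on a set $A$ with $|A|<\Gamma$, then $c|_B$ is still uniform for disjoint $B$ with $|A|+|B|<\Gamma$. (ii) \emph{(Sparseness.)} Since $\dim C=\log_2|C|\le n/64$, the dual code has dimension at least $63n/64$, so by the Singleton bound $\Gamma\le n/64+1$; moreover a string that is uniform on any set of at least $63n/64$ coordinates and arbitrary elsewhere is $\epsilon$-far from $C$ with probability $1-2^{-\Omega(n)}$, by the standard volume bound $|C|\cdot 2^{H_2(\epsilon)n}\ll 2^{63n/64}$.

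Set $K=\lfloor(\Gamma-1)/q\rfloor=\Theta(\Gamma/q)$ and build a decision tree $T^{*}$ consisting of $K$ ``grafted layers''. Inductively, suppose we have reached a node whose root-to-node path fixes a transcript $\pi$ on fewer than $\Gamma$ coordinates (after $j<K$ layers, $|\pi|\le jq<\Gamma$). On the one hand, the tester accepts the mixture $\mathrm{Unif}(C'\cap\mathrm{cons}(\pi))$ with probability at least $2/3$, since it accepts each member of $C'$ with probability at least $2/3$. On the other hand, for \emph{every} coin setting $r$, running $T_r$ from $\pi$ queries at most $q$ further coordinates, so by fact (i) its transcript is distributed identically under $\mathrm{Unif}(C\cap\mathrm{cons}(\pi))$ and under the uniform distribution on the still-free coordinates; by fact (ii) the latter is $\epsilon$-far from $C$ with probability $1-2^{-\Omega(n)}$, so the tester accepts $\mathrm{Unif}(C\cap\mathrm{cons}(\pi))$ with probability at most $1/3+2^{-\Omega(n)}$. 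Averaging the difference of acceptance probabilities over $r$, there is a coin setting $r(\pi)$ for which $T_{r(\pi)}$ distinguishes $\mathrm{Unif}(C'\cap\mathrm{cons}(\pi))$ from $\mathrm{Unif}(C\cap\mathrm{cons}(\pi))$ with advantage at least $1/4$. We graft at this node the copy of $T_{r(\pi)}$ in which any query to a coordinate already fixed by $\pi$ is bypassed using the known answer. Iterating $K$ times yields $T^{*}$, each of whose root-to-leaf paths queries at most $Kq<\Gamma$ distinct coordinates.

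Now run $T^{*}$ on $c'\sim\mathrm{Unif}(C')$, let $L$ be the leaf reached and $A_L$ the (random) set of coordinates queried along its path, and write $H(c')=H(L)+H(c'\mid L)$. For the conditional term: given $L=\ell$, $c'$ is uniform over $C'\cap\mathrm{cons}(\ell)$, and since $|A_\ell|<\Gamma$ fact (i) gives $|C\cap\mathrm{cons}(\ell)|=|C|\,2^{-|A_\ell|}$, hence $|C'\cap\mathrm{cons}(\ell)|\le|C|\,2^{-|A_\ell|}$ and $H(c'\mid L)\le\log|C|-\mathbb{E}[|A_L|]$. For $H(L)$: decompose along the $K$ layers; conditioned on the transcript $\pi$ before layer $j$, the leaf of the $j$-th graft is the leaf of $T_{r(\pi)}$, whose distribution $P$ under $\mathrm{Unif}(C'\cap\mathrm{cons}(\pi))$ satisfies $\mathrm{SD}(P,Q)\ge 1/4$, where $Q$ is its distribution under $\mathrm{Unif}(C\cap\mathrm{cons}(\pi))$; since $Q$ assigns mass $2^{-m_\sigma}$ to a leaf that queried $m_\sigma$ new coordinates, one computes $D_{\mathrm{KL}}(P\,\|\,Q)=\mathbb{E}_P[m]-H(P)$, so Pinsker's inequality gives $H(\text{layer-}j\text{ leaf}\mid\pi)\le\mathbb{E}[m_j\mid\pi]-\Omega(1)$. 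Summing over $j$ and using $\sum_j m_j=|A_L|$ yields $H(L)\le\mathbb{E}[|A_L|]-\Omega(K)$. Adding the two bounds, $\log|C'|=H(c')\le\log|C|-\Omega(K)=\log|C|-\Theta(\Gamma/q)$, which is the claim (the regime $q\ge\Gamma$ being handled by a single truncated graft, where the conclusion is only a constant-factor loss).

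I expect the main obstacle to be the propagation step at the heart of the second paragraph: ensuring that a usable coin setting $r(\pi)$ exists at \emph{every} reachable node. This requires the tester's soundness to be invoked on the conditioned distribution $\mathrm{Unif}(C\cap\mathrm{cons}(\pi))$ rather than on $C$ itself, and hence hinges on never letting the number of fixed coordinates reach $\Gamma$ (so that fact (i) still gives local uniformity) while also keeping it well below $n$ (so that fact (ii), which crucially uses $|C|\le 2^{n/64}$ and the Singleton bound $\Gamma\le n/64+1$, still certifies $\epsilon$-farness); threading these two constraints through the $K$-fold iteration is the delicate part. The remaining work --- the Pinsker/KL estimate for a single graft, the treatment of re-queried coordinates, and the telescoping of entropies --- is routine. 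For non-adaptive testers the construction simplifies: rather than grafting adaptively one separates the coordinates queried with high probability, conditions on them, and argues that the remaining queried coordinates still carry $\Omega(1)$ entropy loss per ``round'', which gives the same conclusion with a cleaner argument.
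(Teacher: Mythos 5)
Your proposal is correct and follows essentially the same route as the paper's proof: at each node you invoke completeness on $\mathrm{Unif}(C'\cap\mathrm{cons}(\pi))$ together with the fact that a random completion of fewer than $\Gamma$ fixed coordinates both induces the same transcript distribution as $\mathrm{Unif}(C\cap\mathrm{cons}(\pi))$ on any $q$ further queries and is $\epsilon$-far from $C$ with high probability, graft the resulting distinguishing trees for $\Theta(\Gamma/q)$ layers, convert the per-layer statistical distance into entropy loss via Pinsker, and finish with the dual-distance bound on the entropy of the unread coordinates --- exactly the paper's ``discerning reader'' grafting argument. The only deviation is bookkeeping: the paper pads every graft to exactly $q$ reads and turns the final $n$-reader into a bijection before splitting the entropy at position $\Gamma$, whereas you handle variable per-leaf query counts directly via the identity $D_{\mathrm{KL}}(P\Vert Q)=\mathbb{E}_P[m]-H(P)$ and the decomposition $H(c')=H(L)+H(c'\mid L)$, an equivalent (arguably cleaner) accounting.
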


We will first prove a weak version of Theorem \ref{thm:mainthm} in Section~\ref{sec:nonadap} which will apply for $q$ non-adaptive queries and imply the bound $|C'|\leq |C|2^{-\Theta\left(\frac{\Gamma}{q^3}\right)}$. This proof will use some of the key ideas that will later manifest in the proof of the theorem in its full generality in Section~\ref{sec:adap}.

An important question is the existence of codes with strong parameters. A random linear code $C$ will have $\Gamma=\Theta(n)$ and $|C|=2^{\Theta(n)}$ with high probability (this is implied by the Gilbert-Varshamov bound~\cite{Gilbert,Varshamov}; MacWilliams et. al.~\cite{MacWST72} showed that this can also be obtained by codes which are self-dual and thus also have good distance), and thus by Theorem \ref{thm:mainthm} we will have that for any $C'\subseteq C$ such that $C$ is $C'$-partially testable with $q$ queries, $|C'|\leq |C|2^{-\Theta\left(\frac{n}{q}\right)}$. For a constant $q$, this implies that partial testability will only be possible with exponentially small subsets. The best explicit (and low uniform decision complexity) construction known to us is that of \cite{DWiseABI}, which gives $|C|=2^{\Theta(n)}$ with $\Gamma=\Theta(n/\log n)$, and thus the bound becomes $|C'|\leq n^{O(1)}|C|2^{-\Theta\left(\frac{n}{q}\right)}$, which is polynomially worse than the non-explicit bound, but is still a strong upper bound on the size of $C'$.

Theorem~\ref{thm:mainthm} implies that there exist properties $P$ that require a lot of queries to test, and that every partition of $P$ into subsets $P_1,\ldots,P_k$ such that $P$ is $P_i$-partially testable for every $1\leq i\leq k$ requires that $k$ will be very big. One might ask if we can prove a converse. That is, if $P$ can be tested with a few queries, can we find such a partition with a small $k$?

\begin{open}
Let $P$ be a property testable with $r$ queries. Is it true that we can partition $P$ into subsets $P_1,\ldots, P_k$ such that $P$ is $P_i$-partially testable with $O(1)$ queries for every $1\leq i\leq k$ and $k$ is bounded by some moderate function of $r$?
\end{open}

Theorem~\ref{thm:mainthm} implies that for some properties, $k$ might be as big as $2^{\Theta\left(\frac{n}{q}\right)}$. It is not clear whether this value of $k$ can always be obtained. The trivial upper bound for every property is by partitioning into $2^{n-q}$ subsets of size $2^q$. Are there properties for which this is required?

\begin{open}
Does there exist a property $P$ such that for every $P'\subseteq P$ where $P$ is $P'$-partially testable with $q$ queries we also have $|P'|\leq |P| 2^{\Theta(q)-\Theta(n)}$?
\end{open}

\subsection{Related work}

Our notion of a partition is similar to existing notions in computational complexity. For  a partition $P=P_1\cup P_2\cup\ldots\cup P_k$ where for every $1\leq i\leq k$, $P$ is $P_i$-partially testable, the designation of $P_i$ can be seen as a ``proof'' that a certain $x$ is in $P$. If $x\in P$, then there exists some $P_i$ such that $x\in P_i$ and therefore a $P_i$-partial tester for $P$ will accept it with high probability. If $x$ is $\epsilon$-far from $P$, then all $P_i$-partial testers for $P$ will reject it with high probability. 

This is similar to the notion of a \emph{Probabilistically Checkable Proof of Proximity (PCPP)}, first introduced by Ben-Sasson et. al.~\cite{PCPPs}.  PCPPs are to property testing as NP is to P. A $q$ query PCPP for a property $P\subset U$ is an algorithm that gets as input $x\in U$ and a \emph{proof of proximity} $\pi\in \{0,1\}^l$. The algorithm must perform at most $q$ queries to $x$ and $\pi$ and fulfill the requirement that if $x\in P$ then there exists a proof $\pi$ that causes the algorithm to accept with high probability, but when $x$ is $\epsilon$-far from $P$ then for any proof $\pi$ the algorithm rejects with high probability. In our setting, the algorithm is allowed free access to a proof of length $l=\log(k)$, but we expect $l$ to be sublinear in the size of $x$. In particular, the property we analyze here cannot have a PCPP with a sublinear length proof.

Rothblum et. al. ~\cite{IPP} introduced the notion of an \emph{Interactive Proof of Proximity (IPP)}. In an IPP for a property $P$, the tester can also communicate with a \emph{prover} in addition to querying the input $x$. If $x\in P$ then the prover has a strategy that will cause the tester to accept with high probability. When $x$ is $\epsilon$-far from $P$, the prover cannot make the tester accept with high probability. Rothblum et. al. show that all languages in NC admit such a protocol with $\sqrt{n}$ query and communication complexity and $\mathrm{polylog}(n)$ communication rounds. Protocols of this kind are only interesting for the case where the communication complexity is sublinear, or else the prover may just give the input to the tester.

Independently of the present work, Gur and Rothblum~\cite{MAP} weakened the IPP model to create \emph{Merlin-Arthur Proofs of Proximity (MAP)}. Gur and Rothblum define a MAP as a proof-system for a property $P$ where for an input $x$ and a proof $\pi$ the verifier reads the entire proof $\pi$ and queries $q$ bits from the input $x$. If $x\in P$, then there exists a proof $\pi$ such that the verifier accepts with high probability, and if $x$ is far from $P$, then for every proof $\pi$ the verifier rejects with high probability. Since we can trivially set $\pi=x$, the only interesting cases are where the length of $\pi$ is sublinear.

The notion of a MAP with $q$ queries and proofs of length $\ell$ for a property $P$ is equivalent to the existence of $k=2^{\ell}$ sets $P_1,\ldots,P_{k}$ such that $P=P_1\cup P_2\cup\ldots\cup P_{k}$ where for every $1\leq i\leq k$, $P$ is $P_i$-partially testable with $q$ queries. 

Gur and Rothblum give several constructions of properties where a MAP with a sublinear length proof greatly reduces query complexity. Gur and Rothblum also introduce the \emph{Tensor Sum} family of properties, for which they prove that for every constant $\alpha>0$ there exists an instantiation of Tensor Sum such that any MAP for it that performs $q$ queries must require a proof of length $\Omega\left(\frac{n^{1-\alpha}}{q}\right)$. This is slightly weaker than Theorem~\ref{thm:mainthm} proved in the present paper. Their lower bound is proved by an extension of the communication complexity technique of Brody et.al.~\cite{CCLB} to \emph{Merlin-Arthur communication complexity}. This proof technique is fitting for the MAP setting, but does not apply to partial testing in general. Gur and Rothblum also prove that this trade off is almost optimal for the Tensor Sum properties.

Additionally, Gur and Rothblum show separations between the power of MAPs and that of IPPs and PCPPs. They also show that 2-sidedness may only give a MAP a $\mathrm{polylog}(n)$ factor improvement in proof length and query complexity over a 1-sided algorithm. Their result implies a connection also between 1-sided and 2-sided partial testability, though not one that would preserve O(1)-query partial testability.

\section{Proof plan}

For the proofs of our main result we develop new techniques that are in some ways more flexible than the traditional use of Yao's method for proving property testing lower bounds. We believe that these techniques hold promise for other lower bound situations where using Yao's method seems to hit a wall.

\subsection{General themes}

As with Yao's method, we contrast the behavior of a supposed test when it is run over an input chosen according to some distribution over ``yes'' instances, with its behavior when it is run over an input chosen according to some distribution over ``no'' instances. However, while in the traditional method these two distributions are chosen based only on the property (and should work against all possible algorithms of a given class), here the distributions are in fact {\em tailor made} for the specific analyzed algorithm. Note that special care must be taken in the definition of such an input distribution. It may not depend on the ``real-time'' behavior of the algorithm (i.e.\ it may not adapt itself to the identity of the random queries that the algorithm has made), and is instead constructed based only on the {\em description} of the algorithm.

The second theme is the use of {\em Shannon entropy}. Our goal here is to prove that if $C$ is $C'$-partially testable, then $C'$ cannot be too large. For achieving this we assume that a testing algorithm exists, and then contrast a uniformly random choice of a word in $C'$ with another word chosen from a ``dangerous'' distribution over words far from $C$. The assumption that the test in fact distinguishes the two distributions allows us to show that a uniformly random choice of a word in $C'$ has low entropy, and hence $C'$ must be small. Using entropy instead of direct counting is crucial to using our main method for obtaining a bound against $2$-sided error tests, rather than only $1$-sided error ones.

\subsection {Proving a bound against non-adaptive algorithms}

The bound against non-adaptive algorithms showcases the above general themes. A supposed $C'$-partial test with $q$ queries is in essence a distribution over query sets of size $q$, such that with high probability a query set is chosen that highlights a difference between members of $C'$ and inputs far from being in $C$. Now if the test were additionally $1$-sided, this would translate to forbidden values in the meaningful query sets, which would give a cross product bound on the size of $C'$. For $2$-sided tests we use an analogous feature of entropy, namely that of subadditivity.

To construct a ``dangerous'' distribution over words far from being in $C$, we first take note of the ``heavy'' indexes, which are those bits of the input that are with high probability part of the query subset of the investigated testing algorithm. There will be only few of those, and our distribution over far words would be that of starting with a restriction of a uniformly random word in $C'$ to the set of heavy indexes, and augmenting it with independently and uniformly chosen values to all other input bits. When contrasted with the uniform distribution over all members of $C'$, we obtain that there must be many query sets that show a distinction between the two distributions over the non-heavy indexes with respect to the heavy ones. This means that the values of the non-heavy indexes in each such query set do not behave like a uniformly independent choice, and thus have a corresponding entropy (conditioned on the heavy index bits) that is significantly less than the maximal possible entropy. Having many such query sets in essence means that we can find many such sets that are disjoint outside the heavy indexes, which in turn leads to an entropy bound by virtue of subadditivity (when coupled with general properties of linear codes).

\subsection{Proving a bound against adaptive algorithms}

An adaptive algorithm cannot be described as a distribution over query sets, but rather as a distribution over small decision trees of height $q$ that determine the queries. Therefore low-entropy index sets cannot be readily found (and in fact do not always exist). To deal with this we employ a new technique, that allows us to ``rearrange'' the input in a way that preserves entropy, but now admits disjoint low-entropy sets.

This new construction is a {\em reader}, which in essence is an adaptive algorithm that reads the entire input bit by bit (without repetitions). As this adaptive algorithm always eventually reads the entire input, it defines a bijection between the input to be read and the ``reading stream'', i.e.\ the sequence of values in the order that the reader has read them.

The construction of this reader is fully based on the description of the $q$-query adaptive algorithm that $C'$-partially tests for $C$ (again we assume that such an algorithm exists). In fact we contrast the uniform distribution over members of $C'$ with not one but many possible distributions over inputs far from $C$. At every stage we obtain that, as long as our reader has not yet read a large portion of the input, the adaptive test can provide a decision tree over the yet-unread bits that shows a difference between a uniformly random member of $C'$ (conditioned on the values of the bits already read) and an independently uniform random choice of values for the unread bits. Our reader will be the result of ``concatenating'' such decision trees as long as there are enough unread bits. Thus in the ``reading stream'' we have sets of q consecutive bits, each with low entropy (as it is distinguishable from independently uniform values). When there are not enough unread bits left, we read all remaining bits arbitrarily, and use general properties of codes to bound the entropy on that final chunk.

The method of constructing a reader not only allows us to do away with the exponential penalty usually associated with moving from non-adaptive to adaptive algorithms, but we additionally obtain better bounds for non-adaptive algorithms as well. This is because a reader can do away also with the penalty of moving from the situation of having many low-entropy query sets to having a family of sets disjoint outside the heavy indexes, in essence by constructing the reader for the uniform distribution over $C'$ based on not one but many ``dangerous'' input distributions.

\section{Preliminaries}

Below we introduce the reader to some basic definitions and results regarding entropy and the dual distance of codes. We refer the reader who is interested in a more thorough introduction of entropy to \cite[Chapter 2]{CovThom}.

First, we introduce a standard notion of distance between distributions.

\begin{definition}[Total variation distance]
Let $p$ and $q$ be two distributions over the domain $\mathcal{D}$. The total variation distance between $p$ and $q$ is defined to be $d_{TV}(p,q)=\frac{1}{2}\sum_{i\in \mathcal{D}} |p(i)-q(i)|$.
\end{definition}

We now introduce the notion of the entropy of a random variable, the entropy of a random variable conditioned on another one, and two well-known lemmas.

\begin{definition}[Entropy]
Let $X$ be a random variable over the domain $\mathcal{D}$. The \emph{entropy of $X$} is defined to be $H[X]=-\sum_{i\in\mathcal{D}} \Pr[X=i]\log(\Pr[X=i])$.
\end{definition}

\begin{definition}[Conditional entropy]
Let $X$ and $Y$ be random variables over the domain $\mathcal{D}$. The \emph{entropy of $X$ conditioned on $Y$} is defined to be $H[X\vert Y]=\sum_{y\in\mathcal{D}} \Pr[Y=i]H[X\vert Y=y]$.
\end{definition}

\begin{lemma}[The chain rule]
Assume that $X$ and $Y$ are random variables. The entropy of the combined state determined by both random variables is denoted $H[X,Y]$. This quantity obeys the chain rule $H[X,Y]=H[X\vert Y]+H[Y]$.
\end{lemma}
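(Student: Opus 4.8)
The plan is to prove the identity by unfolding the definitions of joint and conditional entropy and then using the elementary factorization $\Pr[X=x,Y=y]=\Pr[Y=y]\cdot\Pr[X=x\mid Y=y]$ together with the additivity of $\log$. Throughout I would adopt the standard convention $0\log 0 = 0$, so that summands corresponding to pairs $(x,y)$ with $\Pr[X=x,Y=y]=0$ contribute nothing; this is the only point requiring a modicum of care, since otherwise $\log\Pr[X=x\mid Y=y]$ is undefined for such pairs.

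Concretely, I would start from
\[
H[X,Y] \;=\; -\sum_{x,y}\Pr[X=x,Y=y]\,\log\Pr[X=x,Y=y],
\]
substitute $\Pr[X=x,Y=y]=\Pr[Y=y]\,\Pr[X=x\mid Y=y]$ inside the logarithm, and split using $\log(ab)=\log a+\log b$. This writes $H[X,Y]$ as a sum of two double sums: $S_1=-\sum_{x,y}\Pr[X=x,Y=y]\log\Pr[X=x\mid Y=y]$ and $S_2=-\sum_{x,y}\Pr[X=x,Y=y]\log\Pr[Y=y]$.

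For $S_1$, I would group the sum over $x$ for each fixed $y$, writing $\Pr[X=x,Y=y]=\Pr[Y=y]\Pr[X=x\mid Y=y]$ once more in the coefficient, so that $S_1=\sum_y\Pr[Y=y]\bigl(-\sum_x\Pr[X=x\mid Y=y]\log\Pr[X=x\mid Y=y]\bigr)=\sum_y\Pr[Y=y]\,H[X\mid Y=y]=H[X\mid Y]$, which is exactly the definition of conditional entropy. For $S_2$, since the summand depends on $y$ only, I would first sum over $x$: $\sum_x\Pr[X=x,Y=y]=\Pr[Y=y]$, giving $S_2=-\sum_y\Pr[Y=y]\log\Pr[Y=y]=H[Y]$. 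Adding $S_1$ and $S_2$ yields $H[X,Y]=H[X\mid Y]+H[Y]$, as claimed.

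I do not anticipate a genuine obstacle here — the statement is a classical fact and the argument is a short symbolic manipulation. The only subtlety worth spelling out is handling the support: one restricts all sums to pairs with $\Pr[X=x,Y=y]>0$ (equivalently, to $y$ with $\Pr[Y=y]>0$ and then $x$ with $\Pr[X=x\mid Y=y]>0$) and invokes the $0\log 0=0$ convention to see that this restriction changes none of the three quantities involved. For a fully self-contained treatment one could instead cite \cite[Chapter 2]{CovThom}, but the computation above is short enough to include in full.
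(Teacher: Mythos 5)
Your proof is correct: it is the classical derivation of the chain rule by expanding $H[X,Y]$, factoring $\Pr[X=x,Y=y]=\Pr[Y=y]\Pr[X=x\mid Y=y]$ inside the logarithm, and regrouping the two resulting sums into $H[X\mid Y]$ and $H[Y]$, with the $0\log 0=0$ convention handling zero-probability pairs. The paper states this lemma without proof (deferring to the standard reference on entropy), so there is nothing to contrast with; your write-up supplies exactly the standard argument one would cite.
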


\begin{lemma}[Subadditivity]
If $X$ and $Y$ are random variables, then $H[X,Y]\leq H[X]+H[Y]$.
\end{lemma}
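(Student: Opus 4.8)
The plan is to obtain subadditivity as an immediate consequence of the chain rule together with the principle that conditioning cannot increase entropy. By the chain rule, $H[X,Y] = H[Y\vert X] + H[X]$, so it suffices to establish the single inequality $H[Y\vert X] \leq H[Y]$; adding $H[X]$ to both sides then gives the claim. Thus essentially the entire content of the lemma is concentrated in this one sub-inequality.

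To prove $H[Y\vert X]\leq H[Y]$, I would expand the difference explicitly. Writing $p(x)$, $p(y)$, and $p(x,y)$ for the relevant probabilities and restricting every sum to pairs $(x,y)$ with $p(x,y)>0$ (terms with $p(x,y)=0$ contribute nothing under the convention $0\log 0=0$), a short computation using $\sum_x p(x,y)=p(y)$ and $p(y\vert x)=p(x,y)/p(x)$ yields
\[
H[Y] - H[Y\vert X] \;=\; \sum_{x,y} p(x,y)\log\frac{p(x,y)}{p(x)p(y)}.
\]
This quantity is the mutual information between $X$ and $Y$, so the whole proof reduces to showing it is nonnegative, equivalently that $\sum_{x,y} p(x,y)\log\frac{p(x)p(y)}{p(x,y)} \leq 0$.

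For this last step I would apply Jensen's inequality to the concave function $\log$, using that the numbers $p(x,y)$ over the support form a probability distribution:
\[
\sum_{x,y} p(x,y)\log\frac{p(x)p(y)}{p(x,y)} \;\leq\; \log\!\left(\sum_{x,y} p(x,y)\cdot\frac{p(x)p(y)}{p(x,y)}\right) \;=\; \log\!\left(\sum_{(x,y):\,p(x,y)>0} p(x)p(y)\right) \;\leq\; \log 1 \;=\; 0,
\]
where the last inequality holds because summing $p(x)p(y)$ over a subset of all pairs is at most $\sum_x p(x)\sum_y p(y)=1$. Combining this with the chain rule gives $H[X,Y]=H[X]+H[Y\vert X]\leq H[X]+H[Y]$, as desired. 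An essentially equivalent route would be to expand $H[X]+H[Y]-H[X,Y]$ directly and apply the same Jensen step, avoiding the chain rule altogether.

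The only delicate point — and the one I would check most carefully — is the bookkeeping of zero-probability outcomes: one must verify that all the rewriting above is legitimate when some $p(x)$, $p(y)$, or $p(x,y)$ vanishes, relying on the convention $0\log 0 = 0$ and on the observation that $p(x,y)>0$ forces both $p(x)>0$ and $p(y)>0$, so that no division by zero occurs in any term that is actually summed. Beyond this, the argument is entirely routine.
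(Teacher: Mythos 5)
The paper states this lemma as a standard fact without giving any proof (it is one of the ``well-known lemmas'' in the Preliminaries, with \cite{CovThom} as the background reference), so there is no in-paper argument to compare against. Your proof is the standard textbook one --- chain rule plus $H[Y\vert X]\leq H[Y]$, with the latter reduced to nonnegativity of mutual information via Jensen's inequality --- and it is correct, including the handling of zero-probability terms.
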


The total variation distance is not a natural fit to the context of entropy. A more fitting notion of distance between distributions is divergence.

\begin{definition}[Divergence]
Let $p$ and $q$ be two distributions over $\mathcal{D}$. The \emph{divergence of $q$ from $p$} is defined to be $D(p\Vert q)=\sum_{i\in\mathcal{D}} p(i)\log\left(\frac{p(i)}{q(i)}\right)$.
\end{definition}

Fortunately, divergence and total variation distance are related via Pinsker's inequality.

\begin{lemma}[Pinsker's inequality]
Assume that $p$ and $q$ are two distributions over the domain $\mathcal{D}$. The total variation distance between $p$ and $q$ is related to the divergence of $q$ from $p$ by the inequality $\sqrt{\frac{1}{2} D(p||q)}\geq d_{TV}(p,q)$.
\end{lemma}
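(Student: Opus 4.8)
\section*{Proof plan for Pinsker's inequality}

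The plan is to prove the stated inequality $\sqrt{\tfrac12 D(p\|q)} \ge d_{TV}(p,q)$ by first establishing the one-dimensional case, where $p$ and $q$ are replaced by Bernoulli distributions, and then lifting to the general case via a data-processing style reduction. Concretely, for reals $a,b\in[0,1]$ write $d(a\|b) = a\log\frac{a}{b} + (1-a)\log\frac{1-a}{1-b}$ for the binary divergence. The heart of the argument is the pointwise inequality
\begin{equation}\label{eq:binary-pinsker}
d(a\|b) \ge 2(a-b)^2
\end{equation}
for all $a,b\in[0,1]$, which is exactly the binary form of the claim since for Bernoulli variables $d_{TV} = |a-b|$. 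I would prove \eqref{eq:binary-pinsker} by fixing $b$ and setting $g(a) = d(a\|b) - 2(a-b)^2$, noting $g(b)=0$ and $g'(b)=0$, and checking that $g''(a) = \frac{1}{a(1-a)\ln 2} - 4 \ge \frac{4}{\ln 2} - 4 > 0$ since $a(1-a)\le \tfrac14$. Hence $g$ is convex with a global minimum at $a=b$, so $g\ge 0$ everywhere; a small amount of care is needed at the endpoints $a\in\{0,1\}$ where one factor vanishes, but there the left side is $+\infty$ or the inequality is immediate.

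Next I would reduce the general case to the binary case. Let $S = \{i\in\mathcal D : p(i)\ge q(i)\}$ and define the coarsened (Bernoulli) distributions $\bar p$, $\bar q$ on $\{0,1\}$ by $\bar p(1) = p(S) = \sum_{i\in S} p(i)$ and $\bar q(1) = q(S)$. Then by definition $d_{TV}(p,q) = p(S) - q(S) = d_{TV}(\bar p,\bar q)$, so it suffices to show $D(p\|q) \ge d(\,p(S)\,\|\,q(S)\,)$. This is the data-processing inequality for divergence applied to the deterministic map $i \mapsto \mathbf 1[i\in S]$; to keep the writeup self-contained I would prove it directly using the log-sum inequality, grouping the sum defining $D(p\|q)$ according to whether $i\in S$ or $i\notin S$:
\begin{equation}\label{eq:logsum}
\sum_{i\in S} p(i)\log\frac{p(i)}{q(i)} \ge p(S)\log\frac{p(S)}{q(S)}, \qquad \sum_{i\notin S} p(i)\log\frac{p(i)}{q(i)} \ge (1-p(S))\log\frac{1-p(S)}{1-q(S)},
\end{equation}
where each inequality is an instance of the log-sum inequality $\sum_j a_j \log\frac{a_j}{b_j} \ge \bigl(\sum_j a_j\bigr)\log\frac{\sum_j a_j}{\sum_j b_j}$ (itself a consequence of the convexity of $t\mapsto t\log t$). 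Adding the two inequalities in \eqref{eq:logsum} yields $D(p\|q) \ge d(p(S)\|q(S))$, and combining with \eqref{eq:binary-pinsker} gives $D(p\|q) \ge 2\,d_{TV}(p,q)^2$, which rearranges to the stated bound.

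The main obstacle, such as it is, is bookkeeping rather than mathematical depth: one must handle the degenerate cases cleanly (terms where $p(i)=0$, where $q(i)=0$ with $p(i)>0$ forcing $D(p\|q)=\infty$ so the inequality is trivial, and the endpoint behavior in \eqref{eq:binary-pinsker}), and one must be consistent about the base of the logarithm — since the paper writes $H$ and $D$ with $\log$ presumably base $2$, the constant $\tfrac12$ is correct precisely for that choice, and the convexity constant in the $g''$ computation above must use $\ln 2 = \log_2 e^{-1}$ accordingly. I would state the log-sum inequality as a small internal lemma to avoid cluttering the main computation. No technique beyond convexity of $t\log t$ and elementary calculus is needed.
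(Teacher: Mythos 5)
The paper offers no proof of this lemma at all: Pinsker's inequality is stated in the preliminaries as a known fact (the reader is referred to Cover and Thomas for background), and only the corollary it actually uses (Lemma~\ref{cor:pinsker}) is proved in the text. Your proposal is a correct, self-contained proof, and it is the classical one: reduce to the Bernoulli case by applying the log-sum inequality to the two groups $S=\{i: p(i)\ge q(i)\}$ and its complement, using the identity $d_{TV}(p,q)=p(S)-q(S)$, and then prove the binary inequality $d(a\Vert b)\ge 2(a-b)^2$ by the second-derivative argument with $g''(a)=\frac{1}{a(1-a)\ln 2}-4\ge\frac{4}{\ln 2}-4>0$. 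Your attention to the base of the logarithm is also correct: with base-$2$ logarithms the inequality is in fact slack by a factor of $\ln 2$ relative to the tight natural-log version, so the constant $\frac12$ in the statement is safe; the only blemish is the notational slip ``$\ln 2=\log_2 e^{-1}$'', which should read $\ln 2=(\log_2 e)^{-1}$. The degenerate cases you flag (terms with $p(i)=0$, points with $q(i)=0<p(i)$ forcing $D(p\Vert q)=\infty$, and the endpoints $a\in\{0,1\}$ where $d(a\Vert b)$ is infinite unless $a=b$) are handled exactly as you indicate, so there is no gap.
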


We will actually be using a simpler corollary of it.

\begin{lemma}[Corollary of Pinsker's inequality]\label{cor:pinsker}
Assume that $X$ is a random variable distributed according to $p$ over $\mathcal{D}$, and denote the uniform distribution over $\mathcal{D}$ by $p_u$. The entropy of $X$ is related to its total variation distance from the uniform distribution by $H[X]\leq\log(|\mathcal{D}|) - 2(d_{TV}(p,p_u))^2$.
\end{lemma}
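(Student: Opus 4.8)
The plan is to express the entropy of $X$ exactly in terms of the divergence of $p$ from the uniform distribution $p_u$, and then invoke Pinsker's inequality to lower-bound that divergence. The first step is to expand $D(p \Vert p_u)$ using the definition of divergence together with $p_u(i) = 1/|\mathcal{D}|$ for every $i \in \mathcal{D}$:
$$D(p \Vert p_u) = \sum_{i \in \mathcal{D}} p(i) \log\left(\frac{p(i)}{1/|\mathcal{D}|}\right) = \sum_{i\in\mathcal{D}} p(i)\log p(i) + \log(|\mathcal{D}|)\sum_{i\in\mathcal{D}} p(i).$$
Since $\sum_{i} p(i) = 1$ and $-\sum_{i} p(i)\log p(i) = H[X]$ by definition, this rearranges into the identity $H[X] = \log(|\mathcal{D}|) - D(p\Vert p_u)$.

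The second step is simply to apply the stated form of Pinsker's inequality with the second distribution taken to be $p_u$. Squaring both sides of $\sqrt{\tfrac12 D(p\Vert p_u)} \ge d_{TV}(p, p_u)$ gives $D(p\Vert p_u) \ge 2\,(d_{TV}(p,p_u))^2$. Substituting this lower bound into the identity from the first step yields $H[X] = \log(|\mathcal{D}|) - D(p\Vert p_u) \le \log(|\mathcal{D}|) - 2\,(d_{TV}(p,p_u))^2$, which is exactly the claimed inequality.

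There is essentially no obstacle here: the only points needing a word of care are that $\mathcal{D}$ is finite, so that $p_u$ is well-defined and $\log(|\mathcal{D}|)$ is finite, and that $D(p\Vert p_u)$ is finite (which holds because $p_u$ has full support, so no term involves a division by zero); both are implicit in the setup. One could alternatively prove the bound from scratch by estimating each term $p(i)\log(|\mathcal{D}|\,p(i))$ directly, but routing through the already-stated Pinsker inequality and the exact entropy--divergence identity is the cleanest route and is the one I would take.
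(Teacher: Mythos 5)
Your proof is correct and follows essentially the same route as the paper: both derive the exact identity $H[X]=\log(|\mathcal{D}|)-D(p\Vert p_u)$ by expanding the divergence against the uniform distribution, and then apply Pinsker's inequality to bound $D(p\Vert p_u)$ from below by $2(d_{TV}(p,p_u))^2$.
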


\begin{proof}
$$H[X]=-\sum_{i\in\mathcal{D}} \Pr[X=i]\log(\Pr[X=i])$$
$$=-\sum_{i\in\mathcal{D}} \Pr[X=i]\log(\Pr[X=i]\cdot\frac{1}{|\mathcal{D}|}\cdot |\mathcal{D}|)$$
$$=-\sum_{i\in\mathcal{D}} \Pr[X=i]\log(\frac{1}{|\mathcal{D}|})-\sum_{i\in\mathcal{D}} \Pr[X=i]\log(\Pr[X=i]\cdot |\mathcal{D}|)$$
$$=\log(|\mathcal{D}|)-D(p\Vert p_u)\leq \log(|\mathcal{D}|) - 2(d_{TV}(p,p_u))^2$$
Where the last step follows from Pinsker's inequality.
\end{proof}

Let $x\in \{0,1\}^n$ and $J\subseteq [n]$. We use $x[J]$ to denote the restriction of $x$ to the indices in $J$. That is, the vector $<x_j>_{j\in J}$. When $C\subseteq\{0,1\}^n$ we use $C[J]=\{x[J]\vert x\in C\}$.

Let $C\subseteq\{0,1\}^n$. We denote by $U(C)$ the uniform distribution over $C$. In accordance with the notation above, when $X\sim U(C)$, $X[J]$ denotes the random variable obtained by drawing uniformly from $C$ and then restricting to the indices in $J$. As a shorthand we use $U(C)[J]$ for the distribution of $X[J]$. We use $U_J(C)$ to denote the result of first drawing a vector $x$ according to $U(C)$, and then replacing $x\left[[n]\setminus J\right]$ with a uniformly random vector in $\{0,1\}^{n-|J|}$. In particular, in many cases we will take $C$ to be a singleton, in which case we drop the curly braces and denote this probability distribution by $U_J(x)$.

We will make inherent use of the following result, which can be found e.g.\ in \cite[Chapter 1, Theorem 10]{MacwSlo}.

\begin{lemma}\label{thm:unifsmall}
If $J\subseteq [n]$ is such that $|J|<\Gamma$ and $X\sim U(C)$, then $X[J]$ is distributed uniformly over $\{0,1\}^{|J|}$.
\end{lemma}

We will also need the fact that a mostly random input is far from a code with high probability.

\begin{lemma}\label{lem:unibfar}
Let $C\subseteq\{0,1\}^n$ such that $|C|\leq 2^{\frac{1}{64}n}$, $\epsilon<1/8$, and let $J\subseteq [n]$ be such that $|J| \leq n/2$. $X\sim U_J(C)$ is $\epsilon$-far from $C$ with probability $1-o(1)$. Furthermore, this is still true when conditioned on any value of $X[J]$.
\end{lemma}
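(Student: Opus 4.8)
The plan is a union bound over codewords combined with a Hamming-ball volume bound, exploiting the fact that the ``noisy'' coordinates of $X$ — those outside $J$ — are genuinely uniform and make up at least half of all the coordinates.

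First I would fix an arbitrary value $v$ for $X[J]$; this will take care of the ``furthermore'' clause for free, since nothing in the argument uses any property of $v$. Conditioned on $X[J]=v$, the string $X$ equals $v$ on $J$ and is uniformly distributed on $\bar J := [n]\setminus J$, independently of $X[J]$, with $m := |\bar J| = n-|J| \geq n/2$. For a fixed codeword $c\in C$ the distance splits as $d(X,c) = d(v,c[J]) + d(X[\bar J], c[\bar J]) \geq d(X[\bar J], c[\bar J])$, so it suffices to bound $\Pr[d(X[\bar J], c[\bar J]) \leq \epsilon n]$, i.e.\ the probability that a uniform string in $\{0,1\}^m$ lands in the Hamming ball of radius $\epsilon n$ around the fixed string $c[\bar J]$.

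Next comes the volume bound. Since $\epsilon < 1/8$ and $m \geq n/2$, we have $\epsilon n / m \leq 2\epsilon < 1/4$, so the ball of radius $\epsilon n$ in $\{0,1\}^m$ contains at most $\sum_{i=0}^{\epsilon n}\binom{m}{i} \leq 2^{H_2(2\epsilon)\, m}$ strings, where $H_2$ is the binary entropy function (monotone increasing on $[0,1/2]$, and $2\epsilon < 1/2$). Dividing by $2^m$ gives $\Pr[d(X[\bar J],c[\bar J]) \leq \epsilon n] \leq 2^{-(1-H_2(2\epsilon))m} \leq 2^{-(1-H_2(2\epsilon))n/2}$. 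A union bound over the at most $2^{n/64}$ codewords then yields $\Pr[d(X,C)\leq \epsilon n] \leq 2^{n/64 - (1-H_2(2\epsilon))n/2} = 2^{-\Omega(n)}$, because $H_2(2\epsilon) \leq H_2(1/4) < 31/32$ forces the exponent to be negative and linear in $n$. Hence $X$ is $\epsilon$-far from $C$ with probability $1-o(1)$, and since every estimate was made for an arbitrary fixed $v = X[J]$, the bound holds conditioned on any value of $X[J]$.

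The one place requiring care is the volume estimate: the crude bound $\sum_{i\leq \epsilon n}\binom{m}{i} \leq 2^{H_2(\epsilon)n}$ is too weak here, since $H_2(1/8) \approx 0.54$ and $2^{n/64 + 0.54n}$ does not go to $0$. The point is that the ball radius $\epsilon n$ is small relative to the ambient dimension $m \geq n/2$, so one must put $m$ (not $n$) in the exponent's denominator and use $\epsilon n/m \leq 2\epsilon$ in the entropy; with that in hand the constants work out comfortably. Everything else — the distance decomposition, the union bound, and checking $H_2(1/4) < 31/32$ — is routine.
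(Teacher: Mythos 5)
Your proof is correct and takes essentially the same route as the paper's: condition on an arbitrary value of $X[J]$ (which handles the ``furthermore'' clause), use the fact that the at least $n/2$ coordinates outside $J$ are uniform, bound the probability that $X$ comes within distance $\epsilon n$ of any fixed codeword, and union bound over the at most $2^{n/64}$ codewords. The only difference is the tail estimate — the paper invokes a Chernoff bound on the number of agreements, while you bound the volume of the Hamming ball of radius $\epsilon n$ in $\{0,1\}^m$ via the binary entropy function — and both give the required exponential decay.
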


\begin{proof}
By Chernoff bounds, the probability that a random element $X\sim U_J(C)$ will agree with $c\in C$ in more than $(1-\epsilon) n$ coordinates is at most $\exp\bigg(-n(1/4-\epsilon)^2\bigg)$. Taking the union bound over all $c\in C$ gives us $|C|\cdot\exp\bigg(-n(1/4-\epsilon)^2\bigg)=o(1)$. Since this calculation assumes that $X[J]$ always agrees with $c[J]$, it holds when conditioned on any value of $X[J]$.
\end{proof}

Finally, we will also need to use Lemma \ref{thm:unifsmall} to help us calculate the entropy of uniform random variables in codes.

\begin{lemma}\label{lem:restofbits}
Let $C$ be a code with dual distance $\Gamma$, $J\subseteq [n]$ such that $|J|\leq \Gamma$, $C'\subseteq C$ and $X\sim U(C')$. Then $H[X\vert X[J]]\leq \log|C| - |J|$. Furthermore, this is true when conditioned on any particular value of $X[J]$.
\end{lemma}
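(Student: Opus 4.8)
The plan is to bound the conditional entropy fiber-by-fiber, using the dual-distance property of the ambient code $C$ (and not that of $C'$, which need not project uniformly onto the coordinates in $J$).

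First I would fix an arbitrary value $v$ attained by $X[J]$, and set $C'_v=\{x\in C' : x[J]=v\}$ and $C_v=\{x\in C : x[J]=v\}$, so that $C'_v\subseteq C_v$. Since $X\sim U(C')$, conditioning on the event $X[J]=v$ yields the uniform distribution over $C'_v$, hence $H[X\mid X[J]=v]=\log|C'_v|\leq \log|C_v|$. Next I would count $|C_v|$: because $|J|\leq\Gamma$, Lemma~\ref{thm:unifsmall} applied to $C$ shows that $Y[J]$ is distributed uniformly over $\{0,1\}^{|J|}$ when $Y\sim U(C)$, so every nonempty fiber $C_v$ has exactly $|C|\cdot 2^{-|J|}$ elements. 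Combining these gives $H[X\mid X[J]=v]\leq \log|C|-|J|$, which is precisely the ``furthermore'' clause. Averaging over $v$ via the definition of conditional entropy, $H[X\mid X[J]]=\sum_v \Pr[X[J]=v]\,H[X\mid X[J]=v]\leq \log|C|-|J|$, as claimed.

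I do not expect a real obstacle here; the one point to get right is that the estimate must go through the fibers of $C$ rather than those of $C'$. The naive route $H[X\mid X[J]]=H[X]-H[X[J]]\leq \log|C'|-H[X[J]]$ does not suffice, since for $X\sim U(C')$ the projection $X[J]$ need not be uniform and $H[X[J]]$ can be strictly smaller than $|J|$; only the uniformity of the projection of the \emph{full} code $C$ saves the bound. (If one wishes to be scrupulous about the boundary case $|J|=\Gamma$, where Lemma~\ref{thm:unifsmall} as stated requires $|J|<\Gamma$, one either restricts attention to $|J|<\Gamma$, as the later applications do, or absorbs a harmless additive constant.)
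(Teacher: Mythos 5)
Your proof is correct and is essentially the paper's own argument: both bound each fiber $\{x\in C': x[J]=v\}$ by the corresponding fiber of $C$, whose size $|C|2^{-|J|}$ follows from the uniformity of $U(C)[J]$ via Lemma~\ref{thm:unifsmall}, and then average over $v$ to get the unconditioned bound. Your side remark about the boundary case $|J|=\Gamma$ (where Lemma~\ref{thm:unifsmall} as stated needs $|J|<\Gamma$) is a fair observation that applies equally to the paper's proof and is harmless for the later applications.
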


\begin{proof}
We can partition $C$ according to the values of the bits in $J$:
$$C=\bigcup_{z\in \{0,1\}^{|J|}}\{c\in C\vert c[J]=z\}$$
By Lemma \ref{thm:unifsmall}, all sets on the right hand side are of size $2^{-|J|}|C|$. Obviously, for all $z\in \{0,1\}^{|J|}$, we have $\{c'\in C'\vert c'[J]=z\}\subseteq \{c\in C\vert c[J]=z\}$, simply because $C'\subseteq C$. Thus for every $x\in C'[J]$, we have that
$$H[X\vert X[J]=x]\leq\log|\{c'\in C'\vert c'[J]=z\}|\leq\log|\{c\in C\vert c[J]=z\}|.$$

This completes the ``furthermore'' part of the lemma. To obtain the non-conditioned version, note that by the definition of conditional entropy, $$H[X\vert X[J]]=\mathrm{E}_{x\sim U(C')[J]}H[X\vert X[J]=x]\leq \log\left(2^{-|J|}|C|\right)=\log|C|-|J|.$$
\end{proof}

We note (and use throughout) that trivially $H[X\vert X[J]]=H[X[\{1,\ldots,n\}\setminus J]\vert X[J]]$.

\section{Nonadaptive lower bound}\label{sec:nonadap}

In this section we prove Theorem \ref{thm:mainthm} for the case of a non-adaptive tester and with slightly worse quantitative bounds. For the rest of this section, set $C\subset \{0,1\}^n$ to be a code with dual distance $\Gamma$ and $|C|\leq 2^{\frac{1}{64}n}$. Set $\epsilon<1/8$ and assume that $C$ is $C'$-partially testable for $C'\subseteq C$ with $q$ non-adaptive queries.

Next we define a non-adaptive tester for a property. This definition is consistent with the standard one.
\begin{definition}[Non-adaptive property tester]
A non-adaptive \emph{$\epsilon$-tester} for a code $C\subseteq\{0,1\}^n$ with query complexity $q(\epsilon,n)$ is defined by a collection of query sets $\{Q_i\}_{i\in I}$ of size $q$ together with a predicate $\pi_i$ for each query set and a distribution $\mu$ over $I$ which satisfies:
\begin{itemize}
\item If $x\in C$, then with probability at least $2/3$ an $i\in I$ is picked such that $\pi_i(x[Q_i])=1$.
\item If $d(x,C)>\epsilon$, then with probability at least $2/3$ an $i\in I$ is picked such that $\pi_i(x[Q_i])=0$.
\end{itemize}
For a $C'$-partial tester the first item must hold only for $x\in C'$.
\end{definition}

Set a non-adaptive tester for $C'$, and let $\{Q_i\}_{i\in I}$ be its query sets.

We will be interested only in those query sets which are useful for telling a random element in $C'$ from a mostly random element in $\{0,1\}^n$.

\begin{definition}[$J$-Discerning query set]
Let $J\subseteq [n]$ be such that $|J| \leq n/2$. A query set $Q_i$ is a \emph{$J$-discerning} set if $d_{TV}(U(C')[Q_i],U_J(C')[Q_i])\geq 1/8$.
\end{definition}

Next we prove that a tester must have a lot of such good query sets.

\begin{lemma}\label{lem:goodq}
Set $J\subseteq [n]$ such that $|J| \leq n/2$. With probability at least $1/9$ the query set $Q_i$ picked by the tester is a $J$-discerning set.
\end{lemma}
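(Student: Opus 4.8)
The plan is to use the testability assumption together with Lemma~\ref{lem:unibfar}. Fix $J\subseteq[n]$ with $|J|\le n/2$. Consider two experiments: in the first, draw $X\sim U(C')$ and run the tester on $X$; in the second, draw $X\sim U_J(C')$ and run the tester on $X$. Since $X\sim U(C')$ lies in $C'\subseteq C$, the tester accepts it (i.e.\ picks $i$ with $\pi_i(X[Q_i])=1$) with probability at least $2/3$. On the other hand, by Lemma~\ref{lem:unibfar}, $X\sim U_J(C')$ is $\epsilon$-far from $C$ with probability $1-o(1)$, and conditioned on being $\epsilon$-far the tester rejects with probability at least $2/3$; hence the tester accepts $X\sim U_J(C')$ with probability at most $1/3+o(1)$.

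Therefore the acceptance probability of the tester differs by at least $2/3-1/3-o(1)\ge 1/4$ (say, for $n$ large) between the two input distributions. Now write the acceptance probability as $\sum_{i\in I}\mu(i)\Pr[\pi_i(X[Q_i])=1]$ in each case; the difference is $\sum_{i\in I}\mu(i)\bigl(\Pr_{X\sim U(C')}[\pi_i(X[Q_i])=1]-\Pr_{X\sim U_J(C')}[\pi_i(X[Q_i])=1]\bigr)$, and each summand is bounded in absolute value by $d_{TV}(U(C')[Q_i],U_J(C')[Q_i])$, because $\pi_i$ depends only on the $Q_i$-coordinates. So
$$\sum_{i\in I}\mu(i)\, d_{TV}(U(C')[Q_i],U_J(C')[Q_i])\ \ge\ 1/4.$$
Split the sum over $J$-discerning $i$ (where the TV distance is at most $1$) and non-discerning $i$ (where it is less than $1/8$). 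The non-discerning part contributes at most $1/8$, so the discerning part must contribute at least $1/4-1/8=1/8$, which forces $\mu(\{i : Q_i \text{ is } J\text{-discerning}\})\ge 1/8 > 1/9$.

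The only mild subtlety is bookkeeping the $o(1)$ term from Lemma~\ref{lem:unibfar} against the constants: one needs the gap between the two acceptance probabilities to stay safely above $1/8+1/9$ worth of slack, which holds for all sufficiently large $n$ (and the statement is asymptotic in $n$ anyway, so the "$1/9$" bound is comfortably met). I don't anticipate a real obstacle here — the heart of the argument is just that a predicate reading only $Q_i$ cannot distinguish two distributions better than their total variation distance restricted to $Q_i$, combined with the fact that the tester as a whole \emph{does} distinguish $U(C')$ from $U_J(C')$.
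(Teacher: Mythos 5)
Your proposal is correct and is essentially the paper's own argument: both contrast the tester's acceptance probability on $U(C')$ versus $U_J(C')$ using the correctness guarantee plus Lemma~\ref{lem:unibfar}, and bound each query set's contribution by the total variation distance of the restricted marginals, the only difference being that you argue directly (even obtaining the slightly better constant $1/8$ and tracking the $o(1)$ term explicitly) while the paper phrases the same computation as a contradiction.
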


\begin{proof}
Assume the contrary, that is, that with probability greater than $8/9$ the query set $Q_i$ picked by the tester is such that $d_{TV}(U(C')[Q_i],U_J(C')[Q_i])< 1/8$.

Thus for every such $Q_i$,  $$|\Pr_{U(C')[Q_i]}[\text{tester accepts}]-\Pr_{U_J(C')[Q_i]}[\text{tester accepts}]|<1/8.$$ For the case where the query set picked is not discerning, which occurs with probability smaller than $1/9$, we have no bound (better than $1$) on the difference in probability.

Overall, over the randomness of the tester,
$$|\Pr_{U(C')}[\text{tester accepts}]-\Pr_{U_J(C')}[\text{tester accepts}]|<8/9 \cdot 1/8 + 1/9 = 2/9.$$
But by the correctness of the tester and Lemma \ref{lem:unibfar}, we arrive at $\Pr_{U(C')}[\text{tester accepts}]\geq 2/3$ and $\Pr_{U_J(C')}[\text{tester accepts}]\leq 1/3$, a contradiction.
\end{proof}

We will later want to construct a collection of $J$-discerning sets disjoint outside of a small fixed portion of the input. Towards this end we prove that $J$-discerning sets show difference between an element in $C'$ and a mostly random element in $\{0,1\}^n$ even when we only look outside of $J$.

\begin{lemma}\label{lem:setbadlowent}
Assume that $Q_i$ is a $J$-discerning set, draw $Z\sim U(C')[J]$ and then draw $X\sim U(C')[Q_i]$ conditioned on $X[J]=Z$. With probability at least $1/15$, the distribution of $X[Q_i\setminus J]$ is $1/16$-far from $U(\{0,1\}^{|Q_i\setminus J|})$.
\end{lemma}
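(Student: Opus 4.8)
The plan is to deduce the claimed "local" statement from the "global" one in Definition of $J$-discerning together with Lemma~\ref{lem:unibfar}. Write $Q = Q_i$ and $R = Q\setminus J$ for brevity, and let $m = |R|$. We are told that $d_{TV}(U(C')[Q], U_J(C')[Q])\geq 1/8$. The key observation is that both of these distributions, when restricted to the coordinates in $J\cap Q$, agree: conditioned on the value $z$ of the $J$-coordinates, $U(C')[Q]$ has $R$-marginal equal to $U(C')[R]$ conditioned on $X[J]=z$, whereas $U_J(C')[Q]$ has $R$-marginal equal to the \emph{uniform} distribution on $\{0,1\}^m$ (this is exactly the definition of $U_J(C')$, which resamples all coordinates outside $J$ independently and uniformly). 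So the total variation distance between the two $Q$-distributions equals $\mathrm{E}_{z\sim U(C')[J]}\left[d_{TV}\left(\, (U(C')[R]\mid X[J]=z)\,,\, U(\{0,1\}^m)\,\right)\right]$, an average of per-$z$ local distances. (Here I am using that when two joint distributions share the same marginal on one block, the TV distance of the joint equals the expected TV distance of the conditional on the other block.)

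Given this, I would argue as follows. Let $W(z)$ denote the quantity $d_{TV}\big((U(C')[R]\mid X[J]=z), U(\{0,1\}^{m})\big)$; it always lies in $[0,1]$, and by the above its expectation over $z\sim U(C')[J]$ is at least $1/8$. A standard averaging (reverse Markov) argument then shows that $\Pr_z[W(z)\geq 1/16]$ is bounded below: if $W(z)\geq 1/16$ held with probability less than $1/15$, then $\mathrm{E}[W(z)] < \tfrac{1}{15}\cdot 1 + \big(1-\tfrac{1}{15}\big)\cdot\tfrac{1}{16} < 1/8$, a contradiction. This gives that with probability at least $1/15$ over the draw of $Z\sim U(C')[J]$, the conditional distribution of $X[R]$ (where $X\sim U(C')[Q]$ conditioned on $X[J]=Z$) is $1/16$-far in total variation from uniform on $\{0,1\}^{m}$, which is precisely the statement.

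The one genuinely delicate point — and the step I expect to need the most care — is the decomposition of the TV distance as an expectation of conditional TV distances. It relies on the two distributions agreeing exactly on $J$, which in turn uses that $U_J(C')$ is defined by resampling \emph{all} of $[n]\setminus J$ uniformly and independently, and that $J\cap Q \subseteq J$, so the $J\cap Q$-marginals of both distributions over $Q$ coincide with the $J\cap Q$-marginal of $U(C')$. One must also be slightly careful about conditioning on $Z = z$ with $\Pr[X[J]=z]=0$; but since $Z$ is drawn from $U(C')[J]$, such $z$ occur with probability zero and can be ignored. I would spell out the conditional-TV identity
$$d_{TV}\big(U(C')[Q], U_J(C')[Q]\big) = \sum_{z} \Pr_{U(C')[J]}[z]\cdot d_{TV}\big((U(C')[R]\mid X[J]{=}z), U(\{0,1\}^{m})\big),$$
and then apply the reverse-Markov bound; everything else is routine. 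Note the constants $1/8$, $1/15$, $1/16$ in the statement are exactly what this averaging yields, which is reassuring.
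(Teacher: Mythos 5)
Your proposal is correct and follows essentially the same route as the paper's proof: express the total variation distance between $U(C')[Q_i]$ and $U_J(C')[Q_i]$ via the (expected) conditional distance of $X[Q_i\setminus J]$ from uniform given $Z$, then apply a reverse-Markov averaging argument to extract the constants $1/15$ and $1/16$. (One immaterial nit: $\tfrac{1}{15}+\tfrac{14}{15}\cdot\tfrac{1}{16}$ equals exactly $\tfrac18$, so the strict inequality in your contradiction comes from the assumption $\Pr[W\geq 1/16]<1/15$ being strict, not from that expression being below $1/8$.)
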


\begin{proof}
First note that the distance between $U(C')[Q_i]$ and $U_J(C')[Q_i]$ is the expectation over $Z$ of the distance of $X[Q_i\setminus J]$ from $U(\{0,1\}^{|Q_i\setminus J|})$, conditioned on $X[J]=Z$. By definition, that is at least $1/8$. By simple probability bounds, with probability at least $1/15$, $Z$ is such that the distance of $X[Q_i\setminus J]$ from $U(\{0,1\}^{|Q_i\setminus J|})$ conditioned on $X[J]=Z$ is at least $1/16$.
\end{proof}

However, total variation distance is not very handy for counting. We now use Lemma \ref{cor:pinsker} to transform our total variation bounds into ``entropy loss'' bounds.

\begin{lemma}\label{lem:discent}
If $Q_i$ is a $J$-discerning set and $X\sim U(C')[Q_i]$, then $H[X[Q_i\setminus J]\vert X[J]]\leq |Q_i\setminus J|-0.0005$.
\end{lemma}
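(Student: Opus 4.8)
The plan is to combine Lemma~\ref{lem:setbadlowent} with the Pinsker corollary (Lemma~\ref{cor:pinsker}) and then average. First I would set $m=|Q_i\setminus J|$ and $\mathcal{D}=\{0,1\}^m$, and for each value $z\in C'[J]$ let $p_z$ denote the distribution of $X[Q_i\setminus J]$ conditioned on $X[J]=z$, where $X\sim U(C')[Q_i]$. By Lemma~\ref{cor:pinsker} applied to each $p_z$ against the uniform distribution on $\mathcal{D}$, we get $H[X[Q_i\setminus J]\mid X[J]=z]\leq m - 2\,d_{TV}(p_z,p_u)^2$ for every $z$. Since conditional entropy is the expectation over $z\sim U(C')[J]$ of these quantities, this yields
\begin{equation*}
H[X[Q_i\setminus J]\mid X[J]] \leq m - 2\,\mathrm{E}_{z}\!\left[d_{TV}(p_z,p_u)^2\right].
\end{equation*}

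Next I would lower-bound $\mathrm{E}_z[d_{TV}(p_z,p_u)^2]$ using Lemma~\ref{lem:setbadlowent}. That lemma says that with probability at least $1/15$ over the choice of $z$ (in the notation of the lemma, $Z\sim U(C')[J]$), we have $d_{TV}(p_z,p_u)\geq 1/16$. Since $d_{TV}(p_z,p_u)^2$ is nonnegative, this gives $\mathrm{E}_z[d_{TV}(p_z,p_u)^2]\geq \frac{1}{15}\cdot\left(\frac{1}{16}\right)^2 = \frac{1}{3840}$. Plugging this in gives $H[X[Q_i\setminus J]\mid X[J]]\leq m - \frac{2}{3840} = m - \frac{1}{1920}$, and since $\frac{1}{1920} > 0.0005 = \frac{1}{2000}$, the claimed bound $H[X[Q_i\setminus J]\mid X[J]]\leq |Q_i\setminus J| - 0.0005$ follows. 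One small point to be careful about: I must make sure the $Z$ in Lemma~\ref{lem:setbadlowent} is indeed distributed as $U(C')[J]$ (it is, by the statement of that lemma), so that the ``with probability at least $1/15$'' there matches the averaging measure here.

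There is essentially no serious obstacle; the only mild subtlety is bookkeeping the constants so that the final slack is genuinely positive, and making sure Lemma~\ref{cor:pinsker} is applied in the right direction (entropy of a distribution that is far from uniform is bounded \emph{below} the log of the domain size by twice the squared total variation distance). Everything else is a direct chain: Lemma~\ref{lem:setbadlowent} produces a fixed probability of a fixed total variation gap, Lemma~\ref{cor:pinsker} converts a total variation gap into an entropy deficit, and linearity of the conditional entropy (together with nonnegativity of the squared distance for the ``bad'' values of $z$) lets the deficit survive the averaging. I would present it in exactly that order, keeping the computation of $\frac{2}{15\cdot 256}$ versus $0.0005$ explicit so the reader sees the inequality is satisfied with room to spare.
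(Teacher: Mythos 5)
Your proof is correct and follows essentially the same route as the paper: apply the Pinsker corollary to the conditional distributions, and use Lemma~\ref{lem:setbadlowent} to guarantee that a $1/15$-fraction of the conditioning values carries a $1/16$ total variation gap, so the entropy deficit survives the averaging. Your explicit computation $\frac{2}{15\cdot 256}=\frac{1}{1920}>0.0005$ is in fact tidier than the paper's intermediate rounding and confirms the stated constant.
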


\begin{proof}
Let $L\subseteq \{0,1\}^{|J|}$ be the set of values $z\in \{0,1\}^{|J|}$ such that when drawing $X\sim U(C')[Q_i]$ conditioned on $X[J]=z$, the distribution of $X[Q_i\setminus J]$ is $1/16$-far from $U(\{0,1\}^{|Q_i\setminus J|})$.

Since the entropy is non-negative, we can upper bound
$$H[X[Q_i\setminus J]\vert X[J]]\leq\sum_{z\in L} \Pr_{Z\sim U(C')[J]}[Z=z] H[[Q_i\setminus J]\vert X[J]=z]+\sum_{z\in\{0,1\}^{J}\setminus L} \Pr_{Z\sim U(C')[J]}[Z=z] |Q_i\setminus J|.$$
To treat the first summand on the right hand side, we invoke Lemma \ref{cor:pinsker} to obtain
$$H[[Q_i\setminus J]\vert X[J]=z]\leq |Q_i\setminus J|-0.007.$$
Overall we get
$$\sum_{z\in L} \Pr_{Z\sim U(C')[J]}[Z=z] H[[Q_i\setminus J]\vert X[J]=z]+\sum_{z\in\{0,1\}^{J}\setminus L} \Pr_{Z\sim U(C')[J]}[Z=z] |Q_i\setminus J| \leq |Q_i\setminus J| - 0.0005.$$
\end{proof}

Next we would try to cover the indices in $[n]$ with as many discerning sets as possible. To later sum up the entropy loss, we need these sets to be disjoint outside a not-too-big set. We determine this set of ``bad'' indices as the set of bits read (non-adaptively) by the tester with the highest probability.

\begin{definition}
Define $B=\{k\in [n]\vert \Pr_{Q\sim\mu}[k\in Q] \geq  \frac{2q}{\Gamma}\}$.
\end{definition}

\begin{observation}
$|B|\leq \Gamma/2\leq n/2$. Therefore Lemma \ref{lem:unibfar} holds with $I=B$.
\end{observation}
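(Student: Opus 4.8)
The plan is a one-line first-moment (averaging) argument for the size bound, followed by a trivial check for the ``therefore'' clause. First I would record that the non-adaptive tester is specified by query sets $\{Q_i\}_{i\in I}$, each of size exactly $q$, together with a distribution $\mu$ over $I$. Writing $|Q|=\sum_{k\in[n]}\mathbf{1}[k\in Q]$ and taking the expectation over $Q\sim\mu$, linearity of expectation yields $\sum_{k\in[n]}\Pr_{Q\sim\mu}[k\in Q]=\mathrm{E}_{Q\sim\mu}[|Q|]=q$.

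Next I would restrict this sum to the coordinates in $B$. By the definition of $B$, each $k\in B$ contributes at least $2q/\Gamma$ to the (nonnegative) sum $\sum_{k\in[n]}\Pr_{Q\sim\mu}[k\in Q]$, so $|B|\cdot\frac{2q}{\Gamma}\le q$, i.e.\ $|B|\le\Gamma/2$. Combining this with the elementary fact that the dual distance of a block-length-$n$ code satisfies $\Gamma\le n$ (a nonzero dual codeword has Hamming weight at most $n$) gives $|B|\le\Gamma/2\le n/2$.

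For the ``therefore'' part, I would simply observe that $|B|\le n/2$ is exactly the size hypothesis required to invoke Lemma \ref{lem:unibfar} with its set $J$ taken to be $B$ (the remaining hypotheses $|C'|\le|C|\le 2^{n/64}$ and $\epsilon<1/8$ are already standing assumptions in this section), so that $X\sim U_B(C')$ is $\epsilon$-far from $C$ with probability $1-o(1)$, and this remains true when conditioned on any value of $X[B]$.

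I do not anticipate a genuine obstacle here; the only two points meriting a second glance are that the query sets have size exactly $q$ — so the first moment equals $q$, though an upper bound of $q$ would already suffice — and the bound $\Gamma\le n$ on the dual distance.
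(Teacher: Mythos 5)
Your proof is correct, and it is exactly the standard Markov/averaging argument that the paper leaves implicit (the Observation is stated without proof): $\sum_k \Pr_{Q\sim\mu}[k\in Q]=\mathrm{E}[|Q|]\le q$ forces $|B|\le\Gamma/2$, and $\Gamma\le n$ gives $\Gamma/2\le n/2$, which is precisely the hypothesis of Lemma~\ref{lem:unibfar} needed for $J=B$ (the paper's ``$I=B$'' is a typo for $J=B$). Nothing is missing.
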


Now we can prove that we can find many $B$-discerning sets which are disjoint outside of $B$.

\begin{lemma}\label{lem:disccover}
There exists a set $I_D$ such that:
\begin{itemize}
\item For all $i\in I_D$, $Q_i$ is a $B$-discerning set
\item For all $i,j\in I_D$, $Q_i\setminus B$ and $Q_j\setminus B$ are disjoint
\end{itemize}
$D=\cup_{i\in I_D}(Q_i\setminus B)$ satisfies $\Gamma/2\geq |D|\geq\frac{\Gamma}{18q^2}$. Additionally, $|I_D|\geq \frac{\Gamma}{18q^3}$.
\end{lemma}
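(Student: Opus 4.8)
The plan is a greedy packing argument. Starting from $I_D=\emptyset$ and $D=\emptyset$, I would repeatedly search for a $B$-discerning query set $Q_i$ whose part outside $B$, namely $Q_i\setminus B$, is disjoint from the current $D$; whenever one is found I add $i$ to $I_D$ and add the coordinates of $Q_i\setminus B$ to $D$. The process is halted as soon as $|D|$ reaches $\Gamma/(18q^2)$, or earlier if no admissible set remains. By construction the sets $\{Q_i\setminus B\}_{i\in I_D}$ are then pairwise disjoint, so all that remains is to establish the two bounds on $|D|$; the bound $|I_D|\geq\Gamma/(18q^3)$ follows immediately, since every $Q_i\setminus B$ has at most $q$ coordinates and hence $|I_D|\geq|D|/q$.

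The core step is proving that the process cannot get stuck with $|D|<\Gamma/(18q^2)$. First I would observe that a $B$-discerning set $Q_i$ always satisfies $Q_i\setminus B\neq\emptyset$: if $Q_i\subseteq B$ then $U_B(C')[Q_i]=U(C')[Q_i]$, because $U_B$ only re-randomizes the coordinates outside $B$, so $Q_i$ could not be $B$-discerning. Consequently, and because $D\subseteq[n]\setminus B$, the condition ``$Q_i\setminus B$ disjoint from $D$'' is the same as ``$Q_i\cap D=\emptyset$''. Now suppose toward a contradiction that the process stopped with $|D|<\Gamma/(18q^2)$ and no admissible set left; then every $B$-discerning set $Q_i$ meets $D$. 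Using the union bound together with the fact that each $k\in D$ lies outside $B$ and therefore has $\Pr_{Q\sim\mu}[k\in Q]<2q/\Gamma$ by definition of $B$, I get
$$\Pr_{Q\sim\mu}[Q\cap D\neq\emptyset]\leq\sum_{k\in D}\Pr_{Q\sim\mu}[k\in Q]<|D|\cdot\frac{2q}{\Gamma}<\frac{\Gamma}{18q^2}\cdot\frac{2q}{\Gamma}=\frac{1}{9q}\leq\frac{1}{9}.$$
But by Lemma~\ref{lem:goodq} the query set drawn by the tester is $B$-discerning with probability at least $1/9$, and each such set intersects $D$, so $\Pr_{Q\sim\mu}[Q\cap D\neq\emptyset]\geq 1/9$ --- a contradiction.

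Hence the process runs until $|D|\geq\Gamma/(18q^2)$, which is the claimed lower bound. For the upper bound, note that we stop at the first moment the threshold is crossed and that a single iteration enlarges $D$ by at most $|Q_i\setminus B|\leq q$ coordinates, so $|D|<\Gamma/(18q^2)+q$, which stays below $\Gamma/2$ as long as $q$ is small compared to $\Gamma$ --- the only regime in which the eventual bound on $|C'|$ is meaningful anyway. I expect the main obstacle to be purely organizational: making the greedy termination conditions precise so that a ``stuck'' state really does imply that every discerning set collides with $D$ (this is exactly where the observation $Q_i\not\subseteq B$ is used), and checking that the target window $[\Gamma/(18q^2),\Gamma/2]$ is wide enough for a process that only adds $q$ coordinates at a time to land in it. The probabilistic content --- the union bound against the threshold defining $B$, played off against Lemma~\ref{lem:goodq} --- is short.
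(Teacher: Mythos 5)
Your proposal is correct and follows essentially the same route as the paper: a greedy construction of pairwise-disjoint (outside $B$) $B$-discerning sets, with the key counting step being a union bound over the low-probability bits of $D$ (threshold $2q/\Gamma$ from the definition of $B$) played against the $1/9$ lower bound of Lemma~\ref{lem:goodq}; your phrasing as a contradiction in a ``stuck'' state is just a cleaner bookkeeping of the paper's ``conditioning inflates probabilities by at most a factor of $9$'' step, and your explicit note that a discerning set cannot be contained in $B$, as well as your handling of the $|D|\leq\Gamma/2$ upper bound (stop once the threshold is crossed, with $q$ small relative to $\Gamma$ being the only meaningful regime), matches the paper's intent.
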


\begin{proof}
We construct the set $I_D$ greedily. Suppose that we have discerning sets covering $k$ bits that are disjoint outside of $B$. Choose a set randomly using the tester's distribution conditioned on it being $B$-discerning. This increases the probability of every query set, and every bit to be in a query set, by at most $9$. By the definition of $B$, if we choose a query set randomly using the tester's distribution, the probability that it intersects our already covered bits outside of $B$ is at most $9\frac{2q^2}{\Gamma}k$. As long as this number is smaller than $1$, such a set exists. Therefore, as long as $k < \frac{\Gamma}{18q^2}$ we have a set to add, leading to the bound. To get the upper bound on $|D|$ we can just stop the process before $D$ gets too big.
\end{proof}

Finally, we are ready to calculate the entropy of a uniformly random codeword from $C'$. We use the chain rule to split this into calculating the entropy of the bits in $B$, the entropy of the bits in $D$ conditioned on the bits of $B$, and the entropy of everything else conditioned on the bits in $D\cup B$.

\begin{lemma}
If $X\sim U(C')$, then $H[X]\leq \log|C| - 0.0005\frac{\Gamma}{18q^3}$
\end{lemma}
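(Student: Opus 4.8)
The plan is to apply the chain rule to decompose $H[X]$ into three pieces corresponding to the three parts of the input identified in the preceding lemmas. Writing $R = [n] \setminus (B \cup D)$ for the ``rest'' of the coordinates, the chain rule gives
\[
H[X] = H[X[B]] + H[X[D] \mid X[B]] + H[X[R] \mid X[B], X[D]].
\]
I would bound each summand in turn. For the first, since $|B| \leq \Gamma/2 < \Gamma$, Lemma~\ref{thm:unifsmall} (or Lemma~\ref{lem:restofbits} with $J = \emptyset$ applied to the restriction to $B$) tells us $X[B]$ is uniform over $\{0,1\}^{|B|}$, so $H[X[B]] = |B|$. For the third summand, note $|B \cup D| \leq \Gamma/2 + \Gamma/2 = \Gamma$, so Lemma~\ref{lem:restofbits} with $J = B \cup D$ gives $H[X[R] \mid X[B \cup D]] \leq \log|C| - |B \cup D| = \log|C| - |B| - |D|$.

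The heart of the argument is bounding the middle summand $H[X[D] \mid X[B]]$ using the disjoint family of $B$-discerning sets from Lemma~\ref{lem:disccover}. Since $D = \bigcup_{i \in I_D}(Q_i \setminus B)$ is a \emph{disjoint} union, subadditivity of entropy (conditioned on $X[B]$) gives
\[
H[X[D] \mid X[B]] \leq \sum_{i \in I_D} H[X[Q_i \setminus B] \mid X[B]].
\]
Here I need a small observation: $H[X[Q_i \setminus B] \mid X[B]] \leq H[X[Q_i \setminus B] \mid X[B \cap Q_i]]$ since conditioning on more cannot increase entropy, and the latter is exactly $H[X[Q_i \setminus B] \mid X[J]]$ with $J = B \cap Q_i$ in the notation of Lemma~\ref{lem:discent} — so each term is at most $|Q_i \setminus B| - 0.0005$. (One should double-check that $Q_i$ being $B$-discerning implies it is $(B \cap Q_i)$-discerning, which holds because the distributions $U(C')[Q_i]$ and $U_J(C')[Q_i]$ only depend on $J$ through $J \cap Q_i$; alternatively Lemma~\ref{lem:discent} can be read directly as a statement about $Q_i \setminus B$ conditioned on $X[B]$.) Summing over $i \in I_D$ yields $H[X[D] \mid X[B]] \leq \sum_{i\in I_D}|Q_i \setminus B| - 0.0005\,|I_D| = |D| - 0.0005\,|I_D|$, and then Lemma~\ref{lem:disccover}'s bound $|I_D| \geq \frac{\Gamma}{18q^3}$ finishes this piece.

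Combining the three bounds, the $|B|$ and $|D|$ terms cancel and we are left with $H[X] \leq \log|C| - 0.0005\,|I_D| \leq \log|C| - 0.0005\frac{\Gamma}{18q^3}$, as claimed. The main obstacle — really the only non-bookkeeping point — is the entropy bookkeeping step: making sure the conditioning sets in the subadditivity step are compatible with the ``furthermore'' clauses of Lemmas~\ref{lem:discent} and~\ref{lem:restofbits}, i.e.\ that conditioning on all of $X[B]$ (rather than on $X[B \cap Q_i]$, or on a fixed value) is legitimate. This is handled by the monotonicity of conditional entropy under additional conditioning together with the fact that all the relevant lemmas were stated to hold conditioned on \emph{any} value of the smaller conditioning set, hence also in expectation over a coarser conditioning.
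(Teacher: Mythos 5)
Your proposal is essentially the paper's own proof: the same chain-rule decomposition into $X[B]$, $X[D]\mid X[B]$, and the rest conditioned on $X[B\cup D]$, subadditivity over the disjoint sets $Q_i\setminus B$ combined with Lemma~\ref{lem:discent}, Lemma~\ref{lem:restofbits} for the tail, and Lemma~\ref{lem:disccover} for $|I_D|\geq\frac{\Gamma}{18q^3}$; your extra care about conditioning on $X[B]$ versus $X[B\cap Q_i]$ only makes explicit a step the paper leaves implicit, and it is resolved correctly by monotonicity of conditional entropy. One small inaccuracy: $X[B]$ need not be uniform, since $X\sim U(C')$ rather than $U(C)$ and Lemma~\ref{thm:unifsmall} applies only to the full code, but this is harmless because the argument only needs the trivial bound $H[X[B]]\leq|B|$, which is exactly what the paper uses.
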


\begin{proof}
First, by the chain rule for entropy and the fact that $D\setminus B=D$,

$$H[X]=H[X\vert X[D\cup B]]+H[X[D]\vert X[B]]+H[X[B]]$$

We proceed by bounding each element in the sum. First, trivially:

$$H[X[B]]\leq |B|$$

Next, invoke Lemma \ref{lem:restofbits} for $D\cup B$, since $|D\cup B|\leq \Gamma$. This gives us:

$$H[X\vert X[D\cup B]]\leq \log|C| - |D\cup B|$$

Now, recall that $\cup_{i\in I_D} (Q_i\setminus B)=D$. Since these sets are disjoint outside of $B$, we employ subadditivity to get:
$$H[X[D\setminus B]\vert X[B]]\leq \sum_{i\in I_D} H[X[Q_i\setminus B]\vert X[B]]$$
Now, since these are all $B$-discerning sets, by Lemma \ref{lem:discent} we know that for all $i\in I_D$ we have that $H[X[Q_i\setminus B]\vert X[B]]\leq |Q_i\setminus B| - 0.0005$. By Lemma \ref{lem:disccover} we know that $|I_D|\geq \frac{\Gamma}{18q^3}$. Summing up we get:
$$\sum_{i\in I_D} H[X[Q_i\setminus B]\vert X[B]]\leq |D| - 0.0005|I_D|$$
$$\leq |D| - 0.0005\frac{\Gamma}{18q^3}$$

That is,

$$H[X[D]\vert X[B]]\leq |D| - 0.0005\frac{\Gamma}{18q^3}$$

Summing everything up we get the statement of the lemma.
\end{proof}

From this it follows that:

\begin{theorem}[Weak form of the main theorem]
Let $C'\subseteq C$, if $C$ is $C'$-partially testable with $q$ non-adaptive queries, then $$|C'|=2^{H[X]}\leq |C|2^{- 0.0005\frac{\Gamma}{18q^3}}.$$
\end{theorem}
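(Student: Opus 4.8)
The plan is to obtain the stated bound by combining the immediately preceding lemma with the elementary observation that $2^{H[X]}$ is exactly the number of codewords in $C'$. Recall that the whole argument has been set up so that, rather than counting members of $C'$ directly (which would only work against one-sided error), one controls the Shannon entropy of a uniformly random member of $C'$: the non-adaptive test is turned into a distribution over query sets, the ``heavy'' indices are collected into a small set $B$ with $|B|\le\Gamma/2$, Lemma~\ref{lem:disccover} extracts $|I_D|\ge\frac{\Gamma}{18q^3}$ discerning query sets that are pairwise disjoint outside $B$, each such set contributes a constant entropy loss (Lemma~\ref{lem:discent}), and the chain rule plus subadditivity aggregate these losses together with Lemma~\ref{lem:restofbits} into the bound $H[X]\le\log|C|-0.0005\cdot\frac{\Gamma}{18q^3}$ for $X\sim U(C')$. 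All of this is already in hand.

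The remaining step I would carry out is simply to note that when $X\sim U(C')$, the distribution of $X$ is uniform over a set of size exactly $|C'|$, so $H[X]=-\sum_{c\in C'}\frac{1}{|C'|}\log\frac{1}{|C'|}=\log|C'|$; no structural property of $C'$ is invoked here. Feeding this identity into the preceding lemma gives $\log|C'|\le\log|C|-0.0005\cdot\frac{\Gamma}{18q^3}$, and exponentiating both sides yields $|C'|\le|C|\,2^{-0.0005\frac{\Gamma}{18q^3}}$, which is precisely the asserted inequality (and the promised weak, $q^3$-in-the-denominator form of Theorem~\ref{thm:mainthm} for non-adaptive testers).

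Because the heavy machinery is already established, there is essentially no genuine obstacle left at this final stage; the only things I would double-check are that the hypotheses of the invoked lemmas are in force. Concretely: $|B|\le\Gamma/2$ and $|D|\le\Gamma/2$, so $|D\cup B|\le\Gamma$ and Lemma~\ref{lem:restofbits} applies to $D\cup B$; the standing assumptions $|C|\le 2^{n/64}$ and $\epsilon<1/8$ hold, so Lemma~\ref{lem:unibfar} gives the rejection bound on $U_B(C')$ used to certify that many query sets are $B$-discerning; and the greedy argument in Lemma~\ref{lem:disccover} indeed terminates only after producing $\frac{\Gamma}{18q^3}$ discerning sets disjoint outside $B$. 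With these checks in place, the theorem is immediate, and the entropy-based phrasing is exactly what makes it valid against two-sided error rather than only one-sided.
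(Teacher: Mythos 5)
Your proposal is correct and matches the paper exactly: the theorem is stated in the paper as an immediate consequence of the preceding lemma ($H[X]\leq \log|C| - 0.0005\frac{\Gamma}{18q^3}$ for $X\sim U(C')$), combined with the identity $H[X]=\log|C'|$ for a uniform variable on $C'$ and exponentiation. Your additional verification of the lemmas' hypotheses is sensible but not something the paper spells out further.
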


\section{Adaptive lower bound}\label{sec:adap}

In this section we prove Theorem \ref{thm:mainthm} in its full generality. We start by introducing the mechanism of a \emph{reader}, which allows us to separate the adaptivity and randomness of the algorithm.

\begin{definition}[Reader]
A \emph{$k$-reader} $r$ is a sequence $r_0,r_1,\ldots,r_{k-1}$, where $r_i:\{0,1\}^i\to \{1,\ldots,n\}$ satisfy for all $i<j$ and $y\in\{0,1\}^j$ that $r_i(y[\{1,\ldots,i\}])\neq r_j(y)$.
\end{definition}

Given an input $x\in\{0,1\} ^n$, the reader defines a sequence of its bits. This is the \emph{reading} of $x$.

\begin{definition}[Reading]
Given $x\in\{0,1\} ^n$ and a  $k$-reader $r$, the \emph{reading} $R_{r(x)}$ of $x$ according to $r$ is a sequence $y_1,\ldots,y_k$ defined inductively by $y_{i+1}=x_{r_i(y_1,\ldots,y_i)}$. We define $r_i(x)$ to be $r_i(y_1,\ldots,y_i)$. The set of \emph{unread bits} $U_{r(x)}$ is the subset of $\{1,\ldots,n\}$ that did not appear as values of $r_1,\ldots, r_k$ in the reading.
\end{definition}

We can now define an adaptive tester as a distribution over readers and decision predicates.

\begin{definition}[Adaptive tester]
An \emph{adaptive} $\epsilon$-tester for a code $C\subseteq\{0,1\}^n$ with query complexity $q=q(\epsilon,n)$ is defined by a collection of $q$-readers $\{r^i\}_{i\in I}$ together with predicates $\pi_i$ for each reader, and a distribution $\mu$ over $I$ which satisfies:
\begin{itemize}
\item For all $x\in C$, $\Pr_{i\sim \mu} \left[\pi_i(R_{r^i(x)})=1\right]\geq 2/3$.
\item For all $x\in\{0,1\}^n$ such that $d(x,C)>\epsilon$, $\Pr_{i\sim \mu} \left[\pi_i(R_{r^i(x)})=0\right]\geq 2/3$.
\end{itemize}
\end{definition}

Part of the usefulness of readers is that if we can construct a reader that reads the entire input, then reading the property $C'$ through it preserves its size.

\begin{observation}\label{obs:readallbij}
If $r$ is an $n$-reader, then the function mapping every $x\in\{0,1\}^n$ to its reading $R_{r(x)}$ is a bijection.
\end{observation}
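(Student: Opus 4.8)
The plan is to prove the statement by establishing \emph{injectivity} of the map $x\mapsto R_{r(x)}$; since this is a self-map of the finite set $\{0,1\}^n$, injectivity already yields bijectivity. The heart of the matter is that the reading $R_{r(x)}=(y_1,\ldots,y_n)$ determines, step by step, both \emph{which} input coordinate is consulted at each stage and \emph{what value} it holds, and that over the course of $n$ steps every coordinate in $\{1,\ldots,n\}$ is read exactly once.

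First I would prove, by induction on $i$, that the prefix $(y_1,\ldots,y_i)$ of the reading determines the indices $r_0(x),r_1(x),\ldots,r_{i-1}(x)$ and that $x_{r_{j-1}(x)}=y_j$ for every $j\le i$. The base case uses that $r_0$ is a function on the one-point set $\{0,1\}^0$, hence a fixed index independent of $x$, with $y_1=x_{r_0(x)}$ by definition of the reading. For the inductive step, $r_i(x)=r_i(y_1,\ldots,y_i)$ depends only on the already-determined prefix, and then $y_{i+1}=x_{r_i(x)}$, again by definition of the reading. Next I would invoke the defining property of an $n$-reader, namely that for $i<j$ and any $y\in\{0,1\}^j$ one has $r_i(y[\{1,\ldots,i\}])\neq r_j(y)$: applied to $y$ equal to the length-$j$ prefix of the reading of $x$, this shows $r_i(x)\neq r_j(x)$ whenever $i<j$, so that $r_0(x),\ldots,r_{n-1}(x)$ are $n$ pairwise distinct elements of $\{1,\ldots,n\}$ and therefore exhaust it; equivalently $U_{r(x)}=\emptyset$.

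Combining these facts, if $R_{r(x)}=R_{r(x')}$ then by the induction the two strings consult the same index at every stage and observe the same value there, and since every coordinate of $\{1,\ldots,n\}$ is consulted, we get $x=x'$. This gives injectivity, hence bijectivity. (As an alternative I could argue surjectivity directly: given any $y\in\{0,1\}^n$, define $x$ by setting $x_{r_{i-1}(y_1,\ldots,y_{i-1})}:=y_i$ for each $i$; this assignment is well defined and total because, by the reader's non-repetition property, the indices $r_0(\cdot),r_1(y_1),\ldots,r_{n-1}(y_1,\ldots,y_{n-1})$ are distinct and thus cover $\{1,\ldots,n\}$, and one then checks $R_{r(x)}=y$ by the same inductive bookkeeping.)

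I do not expect a genuine obstacle here. The only point needing care is the induction's bookkeeping — keeping straight that each $r_i$ is fed the reading-prefix $(y_1,\ldots,y_i)$ rather than the raw input, so that agreement of the readings propagates to agreement of the consulted indices — together with the clean application of the $n$-reader's non-repetition condition to upgrade "$n$ distinct indices" to "all indices".
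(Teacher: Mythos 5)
Your proposal is correct and follows essentially the same route as the paper: both arguments prove injectivity by noting that the index consulted at each step is determined by the reading prefix, so the first discrepancy between two inputs on a read bit (equivalently, your inductive contrapositive) forces the readings to differ, and the $n$-reader's non-repetition property guarantees every coordinate is read; bijectivity then follows since this is an injective self-map of the finite set $\{0,1\}^n$. Your added surjectivity argument is a fine but unnecessary extra.
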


\begin{proof}
Suppose that $x'\neq x$, and let $i\in \{1,\ldots,n\}$ be the least index such that $x_{r_i(x)}\neq x'_{r_i(x)}$. Such an $i$ must exist since $r$ reads all bits, and $x'\neq x$. Note that $r_i(x)=r_i(x')$, since it is the first bit read to be different (and thus $y_1,\ldots,y_i=y'_1,\ldots,y'_i)$. Thus $x_{r_i(x)}\neq x'_{r_i(x')}$ and therefore $R_{r(x)}\neq R_{r(x')}$.
\end{proof}

In light of the above, we will construct an $n$-reader and bound the size of $C'$ when permuted by its reading. However, while the end product of the construction is an $n$-reader, the intermediate steps might not be $k$-readers for any $k$. Thus we need to introduce a more general notion.

\begin{definition}[Generalized reader]
A \emph{generalized reader} $r$ is a sequence $r_0,r_1,\ldots,r_{n-1}$ where $r_i:\{0,1\}^i\to \{1,\ldots,n\}\cup\{\star\}$ satisfy for all $i<j$ and $y\in\{0,1\}^j$ one of the following
\begin{itemize}
\item $r_i(y[\{1,\ldots,i\}])\in \{1,\ldots,n\}\setminus r_j(y)$
\item $r_i(y[\{1,\ldots,i\}]) = r_j(y) = \star$
\end{itemize} 
Given a generalized reader $r$, a {\em terminal sequence} in it is $y\in\{0,1\}^i$ such that $r_i(y_1,\ldots,y_i)=\star$, while $r_{i-1}(y_1,\ldots,y_{i-1})\neq\star$ or $i=0$.
\end{definition}

If we fix a certain $x\in\{0,1\} ^n$, a generalized reader defines a sequence of non-repeating indices that at some point may degenerate to a constant sequence of $\star$. Note that every $k$-reader naturally defines a generalized reader by setting all undefined functions to map everything to $\star$.

It is useful to think of a (possibly generalized) reader as a decision tree. With a generalized reader, we will often want to continue the branches of the tree with another reader. This operation is called \emph{grafting}. We start with the notion of a \emph{$0$-branch} and a \emph{$1$-branch}.

\begin{definition}[$0$-branch, $1$-branch]
Let $r$ be a (possibly generalized) reader. The \emph{$0$-branch of $r$} is the reader $r'$ defined by $r'_i(y_1,\ldots,y_i)=r_{i+1}(0,y_1,\ldots,y_i)$. Similarly, the $1$-branch of $r$ is the reader $r''$ defined by $r''_i(y_1,\ldots,y_i)=r_{i+1}(1,y_1,\ldots,y_i)$.
\end{definition}

We can now define grafting, and will do so recursively. Informally, grafting a reader $t$ onto $r$ at $y$ means that at every $\star$ in the decision tree of $r$ that can be reached after reading $y$, we continue the reading according to $t$.

\begin{definition}[Grafting]
Let $r$ and $t$ be generalized readers and $x\in\{0,1\}^i$ be a terminal sequence in $r$. The \emph{grafting of $t$ onto $r$ on the branch $y$} is a new reader $r^{t,y}$ defined as follows.
\begin{itemize}
\item If $t_0\in \{r_0(y_1,\ldots,y_i),\ldots,r_{i-1}(y_1,\ldots,y_i)\}$, graft the $y_{t_0}$-branch of $t$ onto $r$ at $y_1,\ldots,y_i$.
\item If $t_0\notin \{r_0(y_1,\ldots,y_i),\ldots,r_{i-1}(y_1,\ldots,y_i)\}$, set $r_i(y_1,\ldots,y_i)=t_0$, call the new reader $r'$, and graft the $0$-branch of $t$ onto $r'$ at $y_0,\ldots,y_i,0$ and the $1$-branch of $t$ onto $t$ at $y_0,\ldots,y_i,1$.
\end{itemize}
Repeat the above recursively, with the base case being the grafting of an identically $\star$ reader onto $r$ by not changing anything.
\end{definition}

Note that the grafting of a generalized reader onto another results in a generalized reader. Note that it is also possible that $r^{t,y}=r$ when all bits that $t$ may read were already read by $r$ according as $y$.

To introduce the notion of a reader that discerns a random input from an input from $C'$, we will first need to formulate a notion of executing a reader, which is inherently adaptive, on a partly random input.

\begin{definition}[$J$-Simulation of a reader]
Let $r$ be a $q$-reader, $J\subseteq [n]$ and $y\in\{0,1\} ^{|J|}$. The \emph{$J$-simulation of $r$ on $y$} is the distribution $S(r,y,J)$ over $\{0,1\}^q$ defined to be $R_{r(x)}$ where $x[J]=y[J]$, and all bits of $x$ outside of $J$ are picked independently and uniformly at random from $\{0,1\}$.
\end{definition}

We now introduce the notion of a reader that discerns a random input from an input from $C'$.

\begin{definition}[$J$-Discerning reader]
Let $r$ be a (possibly generalized) reader, $J\subseteq [n]$ and $y\in\{0,1\} ^{|J|}$. Let $x$ be a uniform random variable in $\{c\in C'\vert c[J]=y\}$. We say that $r$ is a \emph{$J$-discerning reader for $y$} if $d_{TV}(R_{r(x)},S(r,y,J))\geq 1/8$.
\end{definition}

Next, we prove that many readers are indeed discerning.

\begin{lemma}\label{lem:goodr}
Set $J\subseteq [n]$ such that $|J| \leq n/2$ and $y\in\{0,1\} ^{|J|}$. With probability at least $1/9$ the $q$-reader $r$ picked by the tester is $J$-discerning for $y$.
\end{lemma}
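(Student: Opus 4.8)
The statement to prove is Lemma \ref{lem:goodr}: with probability at least $1/9$ over the tester's choice of reader $r$, the reader is $J$-discerning for $y$. The plan is to mimic the proof of Lemma \ref{lem:goodq} from the non-adaptive section, replacing ``query set'' with ``reader'' throughout. The key conceptual point is that the $J$-simulation $S(r,y,J)$ plays the role that $U_J(C')[Q_i]$ played before: it is exactly the distribution of readings of a string that agrees with $y$ on $J$ and is uniformly random off $J$, i.e.\ the reading of a draw from $U_J(x')$ for any $x'$ with $x'[J] = y$. So running the tester on a draw from $U_J(C')$ (conditioned on $X[J]=y$, say, or averaged over $y$) produces, for each fixed $r$, exactly $S(r,y,J)$, while running it on a draw from $U(C')$ (conditioned on $X[J]=y$) produces $R_{r(x)}$ with $x$ uniform in $\{c\in C'\mid c[J]=y\}$.

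First I would suppose for contradiction that with probability greater than $8/9$ the reader $r$ chosen by the tester satisfies $d_{TV}(R_{r(x)}, S(r,y,J)) < 1/8$, where $x$ is uniform in $\{c \in C' : c[J] = y\}$. For each such reader, since the acceptance decision $\pi_i$ is a function of the reading alone, the acceptance probabilities under the two reading-distributions differ by less than $1/8$; for the remaining readers (probability $< 1/9$) I bound the difference trivially by $1$. Averaging over the tester's randomness $i \sim \mu$ gives
$$\left| \Pr_{X \sim U(C'),\, X[J]=y}[\text{tester accepts}] - \Pr_{X \sim U_J(x),\, x[J]=y}[\text{tester accepts}] \right| < \frac{8}{9}\cdot\frac{1}{8} + \frac{1}{9} = \frac{2}{9}.$$
Here I use that the second quantity equals $\mathrm{E}_{i\sim\mu}[\Pr_{S(r^i,y,J)}[\pi_i = 1]]$ by the definition of $J$-simulation, and the first equals $\mathrm{E}_{i\sim\mu}[\Pr[\pi_i(R_{r^i(x)})=1]]$.

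To derive a contradiction I invoke correctness of the tester on both sides. Since $X \sim U(C')$ conditioned on $X[J]=y$ always lies in $C' \subseteq C$, the tester accepts with probability at least $2/3$. For the ``far'' side I need $X \sim U_J(x)$ (with $x[J]=y$) to be $\epsilon$-far from $C$ with probability $1-o(1)$ even conditioned on the value of $X[J]$; this is precisely the ``furthermore'' clause of Lemma \ref{lem:unibfar}, applicable because $|J| \le n/2$ and $|C| \le 2^{n/64}$, $\epsilon < 1/8$. Hence the tester accepts this input with probability at most $1/3 + o(1)$. The gap $2/3 - 1/3 - o(1) > 2/9$ contradicts the displayed bound, completing the proof. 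I do not expect a serious obstacle here — the only thing to be careful about is the bookkeeping identifying $S(r,y,J)$ with the reading of the appropriate partially-random input, and making sure the conditioning on $X[J]=y$ is handled consistently on both sides (one can either fix $y$ throughout, as the lemma statement does, or state everything conditioned on that event). The genuinely novel difficulties of the adaptive argument — grafting readers together and aggregating the per-reader entropy loss — lie in the lemmas that come after this one, not in this lemma itself.
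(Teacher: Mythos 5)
Your proposal is correct and follows essentially the same argument as the paper: assume more than an $8/9$ fraction of readers are not $J$-discerning for $y$, bound the acceptance-probability gap by $8/9\cdot 1/8+1/9=2/9$, and contradict the tester's correctness on $U(\{c\in C'\mid c[J]=y\})$ versus the far-with-probability-$1-o(1)$ input given by the ``furthermore'' clause of Lemma~\ref{lem:unibfar}. Your identification of $S(r,y,J)$ with the reading of the $J$-fixed, otherwise-uniform input is exactly the paper's $B\sim U_J(y)$, so there is nothing substantive to add.
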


\begin{proof}
Let $r$ be a reader that is not $J$-discerning for $y$. Let $B\sim U_J(y)$ and $G\sim U(\{c\in C'\vert c[J]=y\})$. Denote by $\pi_r$ the predicate associated with $r$. By our assumption, 
$$|\Pr[\pi_r(R_{r(B)})=1]-\Pr[\pi_r(R_{r(G)})=1]|<1/8.$$

Now assume that with probability greater than $8/9$, the $q$-reader picked is not $J$-discerning for $y$. Now consider the difference in acceptance probability when drawing a reader according to $\mu$.

$$|\Pr_{r\sim\mu}[\pi_r(R_{r(B)})=1]-\Pr_{r\sim\mu}[\pi_r(R_{r(G)})=1]|< 8/9\cdot 1/8+1/9=2/9.$$

But by Lemma \ref{lem:unibfar} and the correctness of the tester, $\Pr_{r\sim\mu}[\pi_r(R_{r(B)})=1]\leq 1/3$, and by the correctness of the tester $\Pr_{r\sim\mu}[\pi_r(R_{r(G)})=1]\geq 2/3$, a contradiction.
\end{proof}

A common operation will be to graft a discerning reader with additional arbitrary bits. This does not cause a discerning reader to stop being one.

\begin{definition}
Let $r$ and $s$ be generalized readers. We say that \emph{$r$ contains $s$} if for every $x\in\{0,1\}^n$, the sequence of non-$\star$ elements in $R_{s(x)}$ is a prefix of $R_{r(x)}$.
\end{definition}

Note that in particular, whenever we graft $s$ onto $r$ along some branch, we obtain a reader which contains $r$.

\begin{lemma}\label{lem:discpreserv}
Let $r$ and $s$ be generalized readers such that $r$ contains $s$. Let $J\subseteq [n]$ and $y\in\{0,1\} ^{|J|}$. If $s$ is a $J$-discerning reader for $y$, then so is $r$.
\end{lemma}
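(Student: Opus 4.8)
### Proof plan for Lemma \ref{lem:discpreserv}

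The plan is to show that grafting extra bits onto a discerning reader cannot destroy the total variation distance between the two relevant distributions, because the additional coordinates read by $r$ beyond those read by $s$ are obtained by applying a (branch-dependent, but fixed) function to the data already present in the $s$-reading, and applying a function can only decrease total variation distance. Concretely, fix $J\subseteq[n]$ and $y\in\{0,1\}^{|J|}$, let $G$ be uniform in $\{c\in C'\mid c[J]=y\}$, and let $X$ be the partly-random input distribution underlying the $J$-simulation (so $X[J]=y$ and the remaining bits are i.i.d.\ uniform). By hypothesis $d_{TV}(R_{s(G)},R_{s(X)})=d_{TV}(R_{s(G)},S(s,y,J))\geq 1/8$, and we must deduce the same bound for $r$.

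The key step is to produce, for each of the two input distributions, a randomized map $\Phi$ that sends the $s$-reading to the $r$-reading, and to verify it is the \emph{same} map in both cases. Since $r$ contains $s$, for any input $x$ the non-$\star$ part of $R_{s(x)}$ is a prefix of $R_{r(x)}$; write $R_{r(x)} = (R_{s(x)}, W)$ where $W$ is the tuple of the remaining bits that $r$ reads. Each such remaining bit is $x_{r_j(R_{r(x)}[\{1,\dots,j\}])}$ for the appropriate $j$; its index depends only on the reading prefix, which for the first ``new'' bit is exactly $R_{s(x)}$ (padded by the terminal $\star$'s, which carry no information), and inductively the index of each subsequent new bit depends only on the reading so far. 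So the indices read in the $W$-part are a deterministic function of $R_{s(x)}$ together with the new bits already revealed. Now the subtlety: conditioned on the $s$-reading taking a particular value $v$, under the partly-random distribution $X$ the still-unread bits that $r$ goes on to query — those not among the indices $r$ already used along this branch and not in $J$ — are i.i.d.\ uniform and independent of everything revealed so far; whereas under $G$ they follow the conditional distribution of a uniform codeword. This means $W$ is \emph{not} a deterministic function of $R_{s(x)}$ in general, so one cannot argue by the data-processing inequality for a deterministic map.

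The way around this is to observe that the distribution of $R_{r(x)}$ is obtained from the distribution of the full input $x$ by the \emph{deterministic, input-distribution-independent} map $x\mapsto R_{r(x)}$, and likewise for $s$; and crucially $R_{s(x)}$ is itself a deterministic function of $R_{r(x)}$ — it is just the function that reads off the prefix of the $r$-reading corresponding to the coordinates $s$ would have queried (again using that, given $R_{r(x)}$, one can replay $s$ and identify which entries of the $r$-reading it consults). So we have a single fixed function $g$ with $R_{s(x)} = g(R_{r(x)})$ for every $x$, hence $g$ pushes forward the distribution of $R_{r(G)}$ to that of $R_{s(G)}$ and the distribution of $R_{r(X)}$ to that of $R_{s(X)}$. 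By the data-processing inequality for total variation distance (pushing forward through a fixed deterministic map does not increase $d_{TV}$),
$$d_{TV}(R_{r(G)},R_{r(X)}) \;\geq\; d_{TV}\bigl(g_{*}R_{r(G)},\,g_{*}R_{r(X)}\bigr) \;=\; d_{TV}(R_{s(G)},R_{s(X)}) \;\geq\; 1/8,$$
and since $R_{r(X)}$ is by definition $S(r,y,J)$, this says exactly that $r$ is a $J$-discerning reader for $y$.

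The main obstacle is making rigorous the claim that $R_{s(x)}$ is a genuine (branch-uniform) deterministic function of $R_{r(x)}$, given that $s$ and $r$ are generalized readers with $\star$-terminations and that ``contains'' is defined via prefixes of readings rather than via an explicit tree embedding. I would handle this by induction on the height of $r$, peeling off the first query: if $s_0=\star$ then $s$ reads nothing and the statement is trivial; otherwise $s_0 = r_0$ must hold (since the non-$\star$ part of every $s$-reading is a prefix of the corresponding $r$-reading, the first queried index must agree), so one reads the same first bit $y_1$, passes to the $y_1$-branches of $r$ and $s$, which again satisfy ``$r$'s branch contains $s$'s branch,'' and applies the inductive hypothesis; the function $g$ is then assembled branch by branch. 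Everything else — the data-processing inequality and the identification of $R_{r(X)}$ with $S(r,y,J)$ — is immediate from the definitions.
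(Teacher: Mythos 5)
Your proof is correct and follows essentially the same route as the paper: the paper's argument that the outcomes of $R_{r(\cdot)}$ can be partitioned according to their $R_{s(\cdot)}$ prefixes is exactly your observation that $R_{s(x)}$ is a fixed deterministic coarsening $g$ of $R_{r(x)}$, so total variation distance cannot increase when passing from the $r$-readings to the $s$-readings. Your explicit data-processing formulation and the induction establishing that $g$ is well defined just make rigorous what the paper states tersely, so there is nothing to add.
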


\begin{proof}
Let $B\sim U_J(y)$ and $G\sim U(\{c\in C'\vert c[J]=y\})$. Consider $R_{r(B)}$. Its outcomes can be partitioned according to their $R_{s(B)}$ prefixes. Thus the probability of every event defined by values of $R_{r(B)}$ can be written as a weighted sum of the probabilities of events defined by values of $R_{s(B)}$. The same is true for $R_{r(G)}$ and $R_{s(G)}$. Therefore $d_{TV}(R_{r(x)},S(r,y,J))\geq d_{TV}(R_{s(x)},S(s,y,J))$.
\end{proof}

To prove that a uniform choice in $C'$ does not have high entropy we graft discerning readers one onto the other. We will want to make sure that all the branches of the decision tree are of the same height throughout the grafting, and thus we define the notion of a \emph{padded grafting}.

\begin{definition}[$q$-Padded grafting]
Let $r$ be a generalized reader, $t$ be a $q$-reader and $y\in\{0,1\}^i$ be a terminal sequence in $r$. The \emph{$q$-padded grafting of $t$ onto $r$ on the branch $y$} is defined by the following process. First, let $r'$ be the grafting of $t$ onto $r$ at the branch $y$. Now perform the following repeatedly: Let $z_1,\ldots,z_j$ with $j<q$ be such that $r'_{i+j-1}(y_1,\ldots,y_i,z_1,\ldots,z_{j-1})\neq\star$ but $r'_{i+j}(y_1,\ldots,y_i,z_1,\ldots,z_j)=\star$ (or $j=0$ and $r'_i(y_1,\ldots,y_i)=\star$). Let $k$ be an arbitrary index not in $\{r'_{0},\ldots,r'_{i+j-1}(y_1,\ldots,y_i,z_1,\ldots,z_{j-1})\}$, and redefine $r'_{i+j}(y_1,\ldots,y_i,z_1,\ldots,z_j)=k$. Repeat this process as long as such $z_1,\ldots,z_j$ with $j<q$ exist.
\end{definition}

The above is basically grafting additional arbitrary reads, so that the end-result will always read exactly $q$ bits after reading the sequence $y_1,\ldots,y_i$. The next observation together with Lemma \ref{lem:discpreserv} implies that $q$-padded grafting of a $J$-discerning reader is equivalent to a grafting of some other $J$-discerning reader.

\begin{observation}
Let $r$ be a generalized reader, $t$ a $q$-reader and $y\in\{0,1\}^i$ a terminal sequence in $r$. There exists a reader $s$ containing $t$ such that the $q$-padded grafting of $t$ onto $r$ at $y$ is equivalent to the grafting of $s$ onto $r$ at $y$.
\end{observation}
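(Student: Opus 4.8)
The plan is to construct the reader $s$ explicitly, by ``absorbing'' the arbitrary padding reads into $t$ and then appending a final harmless batch of reads that restores a fixed length. Write $W=\{r_0(y_1,\ldots,y_i),\ldots,r_{i-1}(y_1,\ldots,y_i)\}$ for the (distinct, non-$\star$) indices that $r$ reads along the branch $y$, so $|W|=i$. Unwinding the definition of grafting, for any generalized reader $u$ the grafting of $u$ onto $r$ at $y$ reads, along $y$, first the bits of $W$ (in $r$'s order, with values prescribed by $y$), and then on an input $x$ consistent with $y$ on $W$ it reads the subsequence of $R_{u(x)}$ obtained by deleting every entry whose index lies in $W$; call this the \emph{$W$-contraction} of $u$'s reading on $x$. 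Hence the $q$-padded grafting of $t$ onto $r$ at $y$ reads, past $y$, the $W$-contraction of $R_{t(x)}$ followed by the padding reads that the construction adds to bring the count up to exactly $q$. Fix one resulting padded grafting (i.e.\ fix the ``arbitrary'' index choices), and note that each padded index is chosen outside the set of previously read indices and therefore, in particular, outside $W$.

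Next I would define $s$ as the $(q+i)$-reader that, on input $x$, proceeds in three phases: (i) it first performs exactly the reads of $t$, so that after $q$ steps it sits on the same branch $t$ does; (ii) it then performs the very same padding reads chosen above for the corresponding branch of the padded grafting — this is a legal choice for $s$ because those indices avoid $W\cup R_{t(x)}$, hence a fortiori avoid the strictly smaller set of indices $s$ has read so far; and (iii) it finally reads, say in increasing order of index, every element of $W$ not already appearing in $R_{t(x)}$. A one-line count shows that on every branch $s$ reads exactly $q$ indices (phase (i)) plus $c$ indices (phase (ii), where $c=|W\cap R_{t(x)}|$, since before padding only $q-c$ bits have been read past $y$) plus $i-c$ indices (phase (iii)), that is $q+i$ distinct indices; so $s$ is a well-defined $(q+i)$-reader. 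Since its first $q$ reads coincide with $R_{t(x)}$ for every $x$, the reader $s$ contains $t$.

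Finally I would check that grafting $s$ onto $r$ at $y$ reproduces the padded grafting, again by the $W$-contraction description: past $y$, grafting $s$ reads the $W$-contraction of $R_{s(x)}$, which is the $W$-contraction of $R_{t(x)}$ (from phase (i)), followed by the padding reads unchanged (from phase (ii), since they avoid $W$), followed by nothing at all (from phase (iii), all of whose indices lie in $W$ and are therefore discarded). This is precisely what the padded grafting reads past $y$, and along $y$ both readers agree with $r$; so the two generalized readers induce the same reading on every input, which is the asserted equivalence. I expect the only real work to be the index bookkeeping at each step — in particular, verifying that the padding indices of the padded grafting remain admissible choices for $s$ (they are, as $s$ has read a subset of what the padded grafting has read), and that phase (iii) is exactly what is needed to make $s$ a genuine fixed-length reader without disturbing the grafting (it is, since those indices are exactly the ones that the grafting onto $r$ throws away).
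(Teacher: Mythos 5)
Your construction is correct: defining $s$ as ``run $t$, then the padding reads (which avoid $W\cup$ the indices $t$ reads), then the leftover indices of $W$ in arbitrary order'' is exactly the reader the observation has in mind, and your $W$-contraction description of grafting is the right bookkeeping to verify both the containment of $t$ and the equivalence with the $q$-padded grafting. The paper states this observation without proof, treating it as immediate from the definitions, so your argument simply fills in the intended (and essentially the only) construction, including the correct handling of the branch-dependent number $c$ of collisions with $W$.
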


Now we can finally prove the main lemma, by performing repeated $q$-padded grafting of discerning readers one onto another.

\begin{lemma}
If $X\sim U(C')$, where $C$ is $C'$-partially testable with $q$ queries, then $H[X]\leq\log|C|-\lfloor\frac{1}{32}\Gamma/q\rfloor$
\end{lemma}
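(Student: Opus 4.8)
The plan is to build an $n$-reader $r$ for the input by repeatedly grafting discerning $q$-readers, so that the reading $R_{r(X)}$ decomposes into a sequence of length-$q$ blocks (plus a final leftover chunk), each of which exhibits a definite entropy loss when conditioned on the prefix of the reading that precedes it. Since $r$ is an $n$-reader, Observation~\ref{obs:readallbij} guarantees the map $x\mapsto R_{r(x)}$ is a bijection, so $H[X]=H[R_{r(X)}]$; by the chain rule this equals the sum over blocks of the conditional entropies $H[\text{block}\mid\text{prefix}]$. If we can show each of roughly $\lfloor\frac{1}{32}\Gamma/q\rfloor$ blocks loses at least $1$ bit of entropy relative to its trivial bound of $q$ (and the remaining-bits chunk loses nothing beyond what Lemma~\ref{lem:restofbits} already gives), we get $H[X]\le\log|C|-\lfloor\frac{1}{32}\Gamma/q\rfloor$.

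Concretely I would proceed inductively. Suppose after some number of padded graftings we have a generalized reader $r$ all of whose terminal sequences $y$ have the same length, say $|y|=m<\Gamma/2$ (this is the invariant to maintain, so that Lemma~\ref{lem:goodr} applies with $J$ the set of bits already read along that branch, which has size $m\le n/2$, and $C'[J]$ is ``flat'' since $m<\Gamma$ by Lemma~\ref{thm:unifsmall}). For each terminal sequence $y$, let $J_y=\{r_0(y),\ldots,r_{m-1}(y)\}$ be the bits read along that branch; by Lemma~\ref{lem:goodr}, for the value $y$ itself there is a $q$-reader $t_y$ that is $J_y$-discerning for $y$, picked from the tester's distribution conditioned on the discerning event — crucially this reader's existence is guaranteed branch-by-branch, so we can choose a (possibly different) $t_y$ at each terminal sequence and $q$-padded-graft it there. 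By Lemma~\ref{lem:discpreserv} (via the Observation following $q$-padded grafting, which replaces the padded graft by an ordinary graft of a containing reader) the resulting reader, restricted to the subtree below $y$, is still $J_y$-discerning for $y$. The discerning property, together with the Pinsker corollary argument exactly as in Lemmas~\ref{lem:setbadlowent} and~\ref{lem:discent} (now applied to the conditional distribution of the newly-read $q$-block given $X[J_y]=y$), yields $H[\text{new block}\mid X[J_y]=y]\le q-0.0005$; averaging over $y$ gives the same bound conditioned on the whole prefix of the reading. I iterate this until the common branch length reaches some $m^\star$; I want $m^\star$ roughly $\Gamma/2$, but I must stop early enough that each grafted block of $q$ fresh bits still fits strictly below the dual-distance threshold, so I can afford $\lfloor\frac{1}{32}\Gamma/q\rfloor$ rounds of grafting and still have $m^\star+q<\Gamma$. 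After the last round, I arbitrarily read all remaining bits to complete an $n$-reader, and bound the entropy of that final chunk given everything before it by Lemma~\ref{lem:restofbits} applied with $J$ = first $m^\star$ bits (of size $<\Gamma$): it contributes at most $\log|C|-m^\star$. Summing: $H[X]=H[R_{r(X)}]\le \sum_{\text{blocks}} H[\text{block}\mid\text{prefix}] \le (\log|C|-m^\star) + m^\star - 0.0005\cdot(\text{number of blocks})$, and choosing constants so that $0.0005\cdot\lfloor\frac{1}{32}\Gamma/q\rfloor\ge\lfloor\frac{1}{32}\Gamma/q\rfloor$ — here one would actually tune the $1/8$ discerning threshold and iteration count so the per-block loss is a full $1$ bit rather than $0.0005$, e.g.\ by running more grafting rounds and bundling, or by sharpening the Pinsker step — gives the claimed bound.

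The main obstacle, and the place needing real care, is the bookkeeping that keeps the grafting well-defined and the ``entropy loss per block'' honest. Two subtleties: first, the reader $t_y$ grafted at a terminal sequence $y$ might try to re-read bits already in $J_y$; the definition of grafting handles this by skipping such reads and following the appropriate branch, but then the block below $y$ might read fewer than $q$ genuinely new bits — this is exactly why $q$-padded grafting is introduced, and I need the padding bits to be genuinely new (possible since $|J_y|+q<\Gamma\le n$) and to verify that padding does not destroy the discerning property (it does not, by Lemma~\ref{lem:discpreserv}, since the padded reader contains $t_y$). Second, and more delicate, the distribution against which $t_y$ discerns is the $J_y$-simulation $S(t_y,y,J_y)$, i.e.\ fresh uniform bits outside $J_y$; I must argue that total-variation distance $\ge 1/8$ between $R_{t_y}$ and this simulation, conditioned on $X[J_y]=y$, translates into an entropy deficit for the actual conditional distribution of the $q$ newly-read bits of a uniform $C'$-word given its prefix equals the reading-so-far. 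This is where the reading-is-a-bijection fact and the flatness of $C'[J_y]$ (Lemma~\ref{thm:unifsmall}) combine: the conditional distribution of $X$ given a prefix of $R_{r(X)}$ is uniform over a coset-sliced subset of $C'$, the ``far'' comparison distribution restricted to the same coordinates is exactly uniform, and the divergence/Pinsker bound of Lemma~\ref{cor:pinsker} applies verbatim. Handling the case analysis in the definition of grafting cleanly, and making sure the ``$J$'' in each invocation of Lemma~\ref{lem:goodr} genuinely has size at most $n/2$ throughout, is the bulk of the work.
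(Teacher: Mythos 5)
Your overall strategy is exactly the paper's: build an $n$-reader by repeatedly $q$-padded-grafting branch-specific discerning readers, use Observation~\ref{obs:readallbij} plus the chain rule to decompose $H[X]$ into per-block conditional entropies, bound each block via the discerning property and Lemma~\ref{cor:pinsker}, and finish the tail with Lemma~\ref{lem:restofbits}. However, there is a genuine gap in the quantitative accounting, and the repair you suggest for it does not work. You import the per-block loss $0.0005$ from the non-adaptive Lemmas~\ref{lem:setbadlowent} and~\ref{lem:discent}, run only $\lfloor\frac{1}{32}\Gamma/q\rfloor$ grafting rounds, and then observe (correctly) that $0.0005\cdot\lfloor\frac{1}{32}\Gamma/q\rfloor$ falls far short of the claimed deficit; your proposed fix is to ``tune'' the $1/8$ threshold or sharpen Pinsker so that each block loses a full bit. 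That is impossible: a distribution on $\{0,1\}^q$ at total variation distance $1/8$ from uniform can have entropy as high as $q-\log(8/7)<q-0.2$ (take the uniform distribution on a $7/8$-fraction of the cube), so no strengthening of the Pinsker step can guarantee a $1$-bit loss; and the $1/8$ itself is forced by the $2/3$-vs-$1/3$ soundness gap in Lemma~\ref{lem:goodr}, so it cannot be pushed toward $1$.

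The correct bookkeeping, which is what the paper does, needs neither of your fixes. First, the $0.0005$ constant is an artifact of the non-adaptive setting, where the discerning property is only an average over prefix values and one pays for passing to a typical prefix (the $1/15$--$1/16$ step of Lemma~\ref{lem:setbadlowent}). In the adaptive construction the grafted reader is chosen to be $J_y$-discerning \emph{for the specific branch value $y$}, so the conditional distribution of the $q$ freshly read bits given that prefix is already at distance at least $1/8$ from uniform (preserved under padding by Lemma~\ref{lem:discpreserv}), and Lemma~\ref{cor:pinsker} gives a loss of $2\cdot(1/8)^2=1/32$ per block directly. Second, you should graft for roughly $\Gamma/q$ rounds, i.e.\ until about $\Gamma$ bits have been read (your invariant $|J_y|<\Gamma\leq n/2$ still holds throughout, which is all that Lemmas~\ref{thm:unifsmall}, \ref{lem:goodr} and \ref{lem:restofbits} need), not $\lfloor\frac{1}{32}\Gamma/q\rfloor$ rounds. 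Then the total deficit is $(\Gamma/q)\cdot\frac{1}{32}$, the tail contributes at most $\log|C|-\Gamma$ by Lemma~\ref{lem:restofbits}, and summing gives exactly $H[X]\leq\log|C|-\frac{1}{32}\Gamma/q$ as claimed. With these two corrections your argument coincides with the paper's proof.
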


\begin{proof}
Let us construct an $n$-reader and consider the entropy of $C'$ when permuted by this reader.

Start with a $0$-reader $r^0$. Let $s$ be a $\emptyset$-discerning $q$-reader for the empty word, which must exist since the adaptive tester must pick one with positive probability. Set $r^1$ to be the grafting of $s$ onto $r^0$ on the branch of the empty word.

Assume that we have constructed the $jq$-reader $r^j$. If $jq\geq \Gamma$, graft a reader that reads all remaining bits arbitrarily onto $r^j$ on all branches. Else, perform the following for all branches $y\in\{0,1\}^{jq}$ to obtain $r^{j+1}$ (noting that they are all terminal sequences in $r^j$):

\begin{itemize}
\item If there is no member of $C'$ with the reading $R_{r^j(y)}$, perform a $q$-padded grafting of an arbitrary $q$-reader onto $r^j$ at the branch $y$,
\item If such a member exists, let $s$ be a $\{r_1^j(y),r_2^j(y),\ldots, r_{jq}^j(y)\}$-discerning reader for $y$. Perform a $q$-padded grafting of $s$ onto $r^j$ at the branch $y$.
\end{itemize}

Now let $r$ be the resulting $n$-reader, let $r_{R(C')}$ be the image of $C'$ under the reading of $r$, and let $X\sim U(r_{R(C')})$. By Observation \ref{obs:readallbij}, the distribution of $X$ is the same as starting with a uniformly random member of $C'$ and then taking its reading according to $r$. By the chain rule $H[X]=H[X[\{1,\ldots,\Gamma\}]]+H[X\vert X[\{1,\ldots,\Gamma\}]]$. 

Note that in the case of a word from $C'$, the maximal $j$ in the construction is equal to $\Gamma/q$. By the chain rule we may write 
$$H[X[\{1,\ldots,\Gamma\}]]=\sum_{i=1}^{\Gamma/q} H[X[\{(i-1)q+1,\ldots,iq-1\}]\vert X[\{1,\ldots,(i-1)q-1\}]]$$ and since each sequence of $q$ bits is from the grafting of a reader which is discerning with respect to all the previous ones, we may apply Lemma \ref{cor:pinsker} to obtain
$$H[X[\{1,\ldots,\Gamma\}]]=\sum_{i=1}^{\Gamma/q} H[X[\{(i-1)q+1,\ldots,iq-1\}]\vert X[\{1,\ldots,(i-1)q-1\}]]$$ $$\leq\sum_{i=1}^{\Gamma/q}\left( q-\frac{1}{32}\right)\leq \Gamma- \Gamma/q\cdot \frac{1}{32}$$.

By Lemma \ref{lem:restofbits}, $H[X\vert X[\{1,\ldots,\Gamma\}]]\leq \log|C|-\Gamma$, so by summing it all up we get $H[X]\leq \log|C| - \Gamma/q\cdot \frac{1}{32}$.

\end{proof}

This gives us Theorem \ref{thm:mainthm} in its full generality, as it implies that $|C'|=2^{H[X]}\leq2^{-\Gamma/32q}\cdot|C|$.

\bibliographystyle{plain}
\bibliography{advice}

\end{document}